%%%%%%%%%%%%%%%%%%%%%%%%%%%%%%%%%%%%%%%%%%%%%%%%%%%%%%%
% A template for Wiley article submissions.
% Developed by Overleaf. 
%
% Please note that whilst this template provides a 
% preview of the typeset manuscript for submission, it 
% will not necessarily be the final publication layout.
%
% Usage notes:
% The "blind" option will make anonymous all author, affiliation, correspondence and funding information.

\documentclass[num-refs]{wiley-article}

\usepackage{amsmath, amssymb, amsfonts}
\usepackage{graphicx,psfrag,epsf}
\usepackage{enumerate}
\usepackage{natbib}
\usepackage{url} % not crucial - just used below for the URL 
\usepackage{xr}
\usepackage{xspace}
\usepackage{booktabs}
\usepackage{mathtools}
\usepackage{tabularx} % more modern
\usepackage{hyperref}
\usepackage{multirow}
\usepackage{algorithm}
\usepackage{algorithmic}
\usepackage{subcaption}
\usepackage{color}
\usepackage{booktabs}
\usepackage{bm, stmaryrd}
\usepackage{comment}
\usepackage{verbatim}
\usepackage{url}
\captionsetup{compatibility=false}

\newcommand{\bas}[1]{\begin{align*}#1\end{align*}}
\newcommand{\ba}[1]{\begin{align}#1\end{align}}
\newcommand{\cN}{\mathcal{N}}
\newcommand{\bfeps}{\boldsymbol{\epsilon}}
\newcommand{\ttvec}{\text{vec}}

\newcommand{\sign}{\text{sign}}
\newcommand{\ttspan}{\text{span}}
\newcommand{\beq}[1]{\begin{equation}#1\end{equation}}
\newcommand{\bsplt}[1]{\begin{split}#1\end{split}}
\newcommand{\bR}{\mathbb{R}} %real numbers
\newcommand{\tr}{\text{trace}}
\newcommand{\that}{\hat{\theta}}
\newcommand{\tbar}{\bar{\theta}}

\captionsetup{compatibility=false}
% Add additional packages here if required
\usepackage{siunitx}

% Update article type if known
\papertype{Original Article}
% Include section in journal if known, otherwise delete
\paperfield{Journal Section}

\title{Graph-Fused Multivariate Regression via Total Variation Regularization}

% List abbreviations here, if any. Please note that it is preferred that abbreviations be defined at the first instance they appear in the text, rather than creating an abbreviations list.

% Include full author names and degrees, when required by the journal.
% Use the \authfn to add symbols for additional footnotes and present addresses, if any. Usually start with 1 for notes about author contributions; then continuing with 2 etc if any author has a different present address.
\author[1\authfn{1}]{Ying Liu, Ph.D.}
\author[2\authfn{2}]{Bowei Yan, Ph.D.}
\author[3\authfn{3}]{Kathleen Merikangas, Ph.D.}
\author[4]{Haochang Shou, Ph.D.}

% Include full affiliation details for all authors
\affil[1]{Mental Health Data Science, the New York State Psychiatric Institute and the Department of Psychiatry at Columbia University Irving Medical Center, New York, NY 10032}
\affil[3]{Genetic Epidemiology Research Branch,National Institute of Mental Health, Bethesda, MD 20892}
\affil[4]{Department of Biostatistics, Epidemiology and Informatics, University of Pennsylvania, Philadelphia, PA 19104}
\corraddress{Ying Liu, PhD, Department, Institution, City, State or Province, Postal Code, Country}
\corremail{ying.liu@nyspi.columbia.edu}

%\fundinginfo{Funder One, Funder One Department, Grant/Award Number: 123456, 123457 and 123458; Funder Two, Funder Two Department, Grant/Award Number: 123459}

% Include the name of the author that should appear in the running header
\runningauthor{Liu et al.}

\begin{document}

\maketitle

\begin{abstract}
In this paper, we propose {\it Graph-Fused Multivariate Regression} (GFMR) via Total Variation regularization, a novel method for estimating the association between a one-dimensional or multidimensional array outcome and scalar predictors. While we were motivated by data from neuroimaging and physical activity tracking, the methodology is designed and presented in a generalizable format and is applicable to many other areas of scientific research.  The estimator is the solution of a penalized regression problem where the objective is the sum of square error plus a total variation (TV) regularization on the predicted mean across all subjects.  We propose an algorithm for parameter estimation, which is efficient and scalable in a distributed computing platform. Proof of the algorithm convergence is provided, and the statistical consistency of the estimator is presented via an oracle inequality.  We present 1D and 2D simulation results and demonstrate that GFMR outperforms existing methods in most cases. We also demonstrate the general applicability of the method by two real data examples, including the analysis of the 1D accelerometry subsample of a large community-based study for mood disorders and the analysis of the 3D MRI data from the attention-deficient/hyperactive deficient (ADHD) 200 consortium. 

% Please include a maximum of seven keywords
\keywords{penalized regression, fused lasso, multivariate outcome regression, image outcome, activity data analysis}
\end{abstract}

\section{Introduction}
\label{sec:intro}
%\subsection{Motivation}
Modern biomedical research continues to generate outcome data as a high-dimensional and multivariate object with complex spatial and temporal structures. For example, wearable sensor devices such as accelerometers and smart phones have been increasingly used in observational studies such as National Health and Nutrition Examination Survey (NHANES) and UK Biobank \citep{naslund2015feasibility,pantelopoulos2010survey,patel2015wearable} to densely and objectively track subjects' physical and mental health. Such measures have been shown to avoid self-reporting bias compared with the traditional survey-based methods and to allow continuous monitoring of biosignals such as physical activity patterns over an extended time period and more sensitive to changes with health-related behaviors and clinical outcomes. In neuroscience, sustained growth in computing power and the increasing affordability to collect high-throughput multimodal medical images such as structural and functional magnetic resonance imaging (MRI) has brought promises to search for etiologic evidence and prognostic markers for human diseases. Appropriate statistical methods that acknowledge the multidimensionality of the outcome and the dependency structures induced by temporal or spatial continuity are needed. Motivated by two specific studies that collect 1D time series of physical activity data and 3D neuroimaging data as the outcome measures, we aim to propose a generalizable regression framework that simultaneously reconstructed the denoised outcomes.

In particular, we focus on the problem of regressing the multiple dimensional arrays on a set of scalar features and propose to use graph-guided total variation penalty to ensure temporal or spatial continuity in the outcome. We propose a Graph-fused Multivariate Regression (GFMR) methods, which extend the Total Variation (TV) denoising, a single image smoothing method, to solve regression problem with high dimensional outcomes and produce smooth coefficient maps.
Total Variation smoothing method recovers blurred signals or images by penalizing the average intensity difference of adjacent records/pixels using a $\ell_1$ norm. \cite{rudin1992nonlinear,steidl2006splines} generalized the approach to penalizing higher-order gradient differences and pointed out a connection between the dual formulation of TV and spline smoothing of images. TV denoising can be viewed as a support vector regression problem in the discrete counterpart of the Sobolev space, and can be implemented by efficient algorithms (see e.g., \cite{beck2009fast, condat2013direct,padilla2016dfs}).

There are two ways to impose TV penalization. One way is to assume that the outcomes (i.e., the images) have sparse non-smooth points (the discontinuous points for the step-wise constant approximation for the mean of outcome) , and the penalty is posed upon the change of values of the neighboring elements. Alternatively, one could impose a penalty on the regression coefficients and assume that the associations change smoothly. 
In this paper, we propose a method based on the former assumption, where we introduce regularization directly on the outcome data. Thanks to our choice of the objective function, we can develop an efficient distributed algorithm that converges to the global optimizer. Although we assume the smoothness is in the outcome space, the estimation for coefficients are also smooth as a linear transformation of the fitted outcomes (see Lemma A). 
Alternatively, \cite{chen2016local} introduced a local region image-on-scalar regression method, which used the $\ell_1$ regularization on the coefficients instead of the outcome. Despite the fact, their effort to derive a scalable algorithm, their proposed algorithm based on ADMM cannot handle 2D or 3D outcomes directly. It needs to slice the 3D image into small 2D pieces and solve the penalized regression separately. Implementation is thus case-specific. The final estimation is based on combined local estimators \cite{chen2016local}; thus, it is hard to derive the theory for global convergence and consistency. In comparison, We demonstrate that our proposed algorithm can be used directly to outcomes of 3D brain imaging.

In a nutshell, we transformed the original objective function and proposed an over-relaxed ADMM algorithm. A sub-step of the iterative updates in the proposed algorithm is to solve a graph-fused lasso (GFL) problem. This sub-step is the most computationally intensive step, but by our choice of regularizing the outcome instead of the coefficient, this step has a separable objective and can be computed distributively. And we can solve this step efficiently, borrowing the strength of the state-of-art algorithm for GFL. Our proposed algorithm also inherits the appealing feature of GFL that our method can also easily handle complex adjacency structure. The smoothness constraint can go beyond adjacent voxels and be customized by any adjacency structures corresponding to the edges from an arbitrary graph.  This could include, for example, the periodicity in longitudinal data such as seasonal effects in physical activity or temporal brain activation during a task fMRI scan.

To the best of our knowledge, our proposed method is the first to provide a global coefficient map estimator for the multivariate outcome, smoothed with respect to a flexible user-defined adjacency structure.  In Section \ref{sec:algorithm}, we provided a novel algorithm that our analysis proved, is globally convergent. We develop the consistency of the estimator via an oracle inequality under the Gaussian noise assumption in Section \ref{sec:consistency}.
Our simulation results in Section \ref{sec:simu}, and 1D and 3D real data examples in Sections 7, 8 demonstrate the feasibility and competitive performance of the proposed algorithm to be applied to regression problems involving high dimensional activity or brain imaging data. 

\section{Other Relevant Works}
Total variation penalization in the regression setting has mostly been used to predict a scalar outcome from multivariate or high-dimensional predictors. For example,  fused lasso methods \citep{tibshirani2005, tib2011} were proposed to predict a scalar outcome and assumed that the coefficients of certain predictors are similar to each other. 
$L_1$ penalization was also proposed for scalar-on-image regression \citep{wang2018}. GraphNet \citep{logan2013} used the elastic net penalization for smooth scalar-on-image regression.

There have been relatively few works on regression with the multivariate outcome, and even fewer that acknowledge the dependency structures in the outcome. One line of research treats the multivariate outcomes as random realizations of functional objects and conduct function-on-scalar regression (FoSR) \citep{reiss2010fast,goldsmith2016assessing,scheipl2015functional}.
FoSR assumes both the outcomes and the regression coefficients are smooth over time or space, as represented by the expansion of a set of pre-specified basis functions.
There also exist methods for tensor response regression, such as the envelope regression ~\citep{zhou2013tensor,li2017parsimonious, Cook2010}. These methods take advantage of inter-relationship within the response variables to improve prediction accuracy under general sparsity assumption and underlying lower-dimensional representation. However, they have been demonstrated in 2D or 3D applications, but has not been proposed and evaluated for 1D outcomes.

\section{Notations and Operations}
%We first introduce the notations used in the paper.
%Throughout the paper we use lower-case and upper-case letters to represent vectors and matrices. 
Multidimensional array $A \in \bR ^{r_1 \times \dots \times r_m}$ is called an m{\it th}-order-tensor. The $\ttvec(\cdot)$ operator denotes the vectorization operation that stack the entries of a tensor into a one-dimensional vector, so that an entry $a_{i_1,\dots,i_m}$ of $A$ maps to the $j$th entry of $\ttvec(\cdot)$, where $j=1+\sum_{k=1}^m(i_k-1)\Pi_{k'=1}^{k-1}r_k'$, and $\text{mat}(\cdot)$  denotes the inverse operator. For example, when the tensor is 2D, vectorization is to stack the columns of the matrix. 

 We will use $\|\cdot\|_F$ for Frobenius norm of a tensor,
$\|A\|^2_F= \sum_{i_1,\dots, i_m}A_{i_1,\dots, i_m}^2$. We denote the $\ell_1$ norm of a tensor as the $\ell_1$ norm of its vectorization: $\|A\|_{\ell_1}= \|\ttvec(A)\|_{\ell_1}$. $\otimes$ represents the Kronecker product between two tensors.  For an integer $n$, denote $[n]=\{1,2,\dots,n\}$.

\section{Model and Inference}
\label{sec:model}
\subsection{Regression Model}
Suppose we observe the covariates and tensor outcomes from $n$ subjects $(X_i,\bar{Y}_i)_{i=1}^n$, where $X_i\in \bR^{p}, \bar{Y}_i\in \bR^{r_1 \times \dots \times r_m}$, $M=r_1 r_2 \dots r_m$ is the total number of outcome entries and $Y_i=\ttvec({\bar{Y}_i}) \in \bR^{M}$.
Consider the following model 
\bas{
Y_{i} =& X_i^T\Gamma + \epsilon_{i} =\sum_{t \in [p]} X_{it}\Gamma_{t\cdot}+\epsilon_i, %\quad \epsilon_{i}\stackrel{i.i.d.}{\sim} \cN(0,\sigma^2I_M)
}
where $\Gamma\in \bR^{p\times M}$ is the coefficient matrix of interest, and each row $\Gamma_{t \bullet}\in \bR^{M}$ is a tensor of the same shape as $Y_i$'s, representing the coefficient map corresponds to the $t${\it th} feature. 

While our theory is derived under the case of i.i.d. Gaussian noise, the method can be applied with any real-value responses without distribution assumptions to produce smooth coefficient maps. 
Let $X\in \bR^{n\times p}$ be the design matrix and $Y\in \bR^{n\times M}$ be the matrix of outcomes.
Now we can stack all the observations and reformulate the problem into the following matrix form:
\ba{
\ttvec(Y^T) = \ttvec((X\Gamma)^T)+\bfeps, %\quad \bfeps \sim \cN(0,\sigma^2I_{nM})
\label{model}
}
where %$\ttvec(Y)=(Y_{1}^T,Y_{2}^T,\dots,Y_{n}^T)^T \in \mathbb{R}^{nM}$ 
$\ttvec(Y^T)=\left(\begin{array}{c}
Y_{1}\\Y_{2}\\\vdots\\Y_{n}\\
\end{array}\right) \in \mathbb{R}^{nM} $ is a long vector. 

Define the hat matrix as $H:=X(X^TX)^{-1}X^T$, which is the projection matrix to $\text{span}(X)$. And the projection matrix to $\text{span}(X)^{\perp}$ is $I-H$. In the vectorization form, we can define the extended projection matrix to project measurements from each voxel to the linear space $\text{span}(X)^{\perp}$:
\bas{
I_{nM}-H_v = (I-H) \otimes I_M\in \mathbb{R}^{nM\times nM},
}
where $H_v$ is the projection matrix to linear space $\text{span}(X)$.
Notice that  for any $\Gamma  \in \mathbb{R}^{p\times M}$,
\bas{
(I-H_v) \ttvec((X\Gamma)^T) =& ((I-H) \otimes I_M )\ttvec(\Gamma^TX^T)= \ttvec(I_M\Gamma^TX^T (I-H)^T)=\bold{0}
}
% Graph
%Now let us introduce the smoothing graph in this problem. 
%The incidence matrix $D$ is used to customize the 'smoothness' edges on a graph $G$. %

The penalization for a smoother coefficient map is introduced through an undirected graph.
Let $D$ be the incidence matrix for the graph $G$ whose edges represent the smoothing affinity. For example, the graph $G$'s nodes can be all the pixels/voxels in a image or timestamps in 1D functional data.  The grid graph is the most commonly used graph, where the edges are between all the adjacent pixels in image or timestamps in 1-D data. The graph $G$ can also be chosen to reflect some sophisticated affinity relationships.

To be specific, let $M$ and $m$ be the number of vertices and edges respectively, $D\in \bR^{M \times m}$, where the $j$th column in $D$ corresponds to $j$th edge $e_j$. And $D_{i,j}=0$, for $i$ such that $v_i$ is not vertex of $e_j$. For vertexes of $e_j$, $\{i_1,i_2\},i_1<i_2$, $D_{i_1,j}=1$ and $D_{i_2,j}=-1$.
%\[D_{i,j} = \left\{\begin{aligned} -1 & \mbox{ edge $e_j$ has one vertex of $v_i$;}\\ 1 &  \mbox{ edge $e_j$ has one vertex of $v_i$;}\\ 0 & \mbox{ otherwise}\end{aligned} \right. \]
Note the objective function has a $\ell_1$ penalization term for the differences of means on vertexes of the edges. The signs of $D_{i_1,j}$ and $D_{i_2,j}$ can be switched without changing the objective.
Define the extended incidence matrix $D_v=I_n\otimes D \in \bR^{nM\times nm}$, then
$ \| X\Gamma D\|_{\ell_1}=\sum_{i=1}^n \|X_i\Gamma D\|_{\ell_1}$.
The difference in the dimensions in outcome is thus taken care of in defining the incidence matrix. The algorithm is the same for 1D, 2D, or higher dimensional algorithm. %The complexity of computation increases linearly with the number of edges.}
%\rd I deleted some $^T$ here to match the dimensions that you may want to double-check.\bk
We propose the Graph-Fused Multivariate Regression (GFMR) by minimizing the loss function as follows.
\begin{equation}
\min_\Gamma \frac{1}{2}\| Y-X\Gamma\|_F^2 +\lambda \| X\Gamma D\|_{\ell_1}.
\label{eq:obj}
\tag{P}
\end{equation}
where $\lambda$ is a tuning parameter controlling the trade-offs between the linear correlation and the smoothness of the fitted image.

\subsection{Algorithm}
\label{sec:algorithm}
Our algorithm is based on the Alternating Direction Method of Multipliers (ADMM), a general algorithm for distributed computing (see~\citet{boyd2011distributed} for an overview). However, to implement ADMM naively to our proposed objective function~\eqref{eq:obj} suffer from no convergence due to the high dimension of the outcome. Thus we propose the following transformation of the objective function and an over-relaxed ADMM algorithm, which we will later provide proof of convergence.

To make our problem satisfy the {\it TV-separability condition} \citep{wahlberg2012admm}, we consider a reformulation of the original problem. Define $\theta=X\Gamma\in \bR^{n M}$, the problem is equivalent to,
\beq{
\bsplt{
\min_\theta \quad & \frac{1}{2}\| \ttvec(Y^T) -\theta\|^2_F +\lambda \| D_v^T\theta \|_{\ell_1};\\
\textrm{s.t.} \quad & (I-H_v)\theta=0.
}
\label{eq:cons_opt}\tag{CP}
}
The solution of \eqref{eq:obj} and that of \eqref{eq:cons_opt} have the following relationship.
%Our proposed algorithm is firstly inspired by the observation that the original problem \ref{eq:obj} is equivalent to the total variation denoising problem with the constraint that the solution signal can be written as the vectorization of the product of design matrix $X$ and some $\Gamma$. In other words, the solution signal belongs to the extended span of the design matrix $X$, or the projection to the orthogonal space equals to 0, $H_v\theta =0$. 
\begin{lemma}
\label{lem:theta2gamma}
Let $\that$ be the optimal solution of \eqref{eq:cons_opt}, and $\hat{\Gamma}$ be the optimal solution of \eqref{eq:obj}. If rank($X)=p$, then 
$
\hat{\Gamma}=(X^TX)^{-1}X^T \text{mat}(\that)_{n\times M}
$.
\end{lemma}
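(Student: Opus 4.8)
The plan is to show that the constrained problem~\eqref{eq:cons_opt} is simply~\eqref{eq:obj} rewritten in the variable $\theta=\ttvec((X\Gamma)^T)$, so that the two minimizers are in exact correspondence under this change of variables. First I would characterize the feasible set $\{\theta:(I-H_v)\theta=0\}$. One inclusion is already supplied by the computation preceding~\eqref{eq:cons_opt}, which shows $(I-H_v)\ttvec((X\Gamma)^T)=0$ for every $\Gamma$; hence every image of the map $\Gamma\mapsto\ttvec((X\Gamma)^T)$ is feasible. For the reverse inclusion I would take an arbitrary feasible $\theta$, write $A=\text{mat}(\theta)_{n\times M}$ so that $\theta=\ttvec(A^T)$, and use the identity $(I-H_v)\ttvec(A^T)=\ttvec(((I-H)A)^T)$ (which follows from $I-H_v=(I-H)\otimes I_M$ and symmetry of $I-H$) to deduce $(I-H)A=0$, i.e. $A=HA$. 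Setting $\Gamma=(X^TX)^{-1}X^TA$ then gives $A=HA=X(X^TX)^{-1}X^TA=X\Gamma$, so the feasible set is exactly the range of the parametrization.

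Next I would verify that the two objectives agree along this parametrization. Since vectorization is an isometry for the Frobenius norm, the data-fidelity term becomes $\tfrac12\|\ttvec(Y^T)-\ttvec((X\Gamma)^T)\|^2=\tfrac12\|Y-X\Gamma\|_F^2$. For the penalty I would apply the Kronecker identity $(I_n\otimes D^T)\ttvec(A^T)=\ttvec((AD)^T)$ with $A=X\Gamma$, which yields $D_v^T\theta=\ttvec((X\Gamma D)^T)$ and therefore $\|D_v^T\theta\|_{\ell_1}=\|X\Gamma D\|_{\ell_1}$ by the definition of the tensor $\ell_1$ norm. Thus the objective of~\eqref{eq:cons_opt} evaluated at $\theta=\ttvec((X\Gamma)^T)$ coincides with the objective of~\eqref{eq:obj} at $\Gamma$.

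Finally, the assumption $\text{rank}(X)=p$ makes $\Gamma\mapsto X\Gamma$ injective (as $X^TX$ is then invertible), so $\Gamma\mapsto\ttvec((X\Gamma)^T)$ is a bijection onto the feasible set; combined with the two previous steps this is a value-preserving bijection between the feasible points of the two problems, which forces the minimizers to correspond as $\that=\ttvec((X\hat\Gamma)^T)$. Reading off $\text{mat}(\that)_{n\times M}=X\hat\Gamma$ and left-multiplying by $(X^TX)^{-1}X^T$ then gives the claimed formula. I expect the only delicate point to be the bookkeeping: keeping the vectorization/Kronecker identities and the precise reshaping convention of $\text{mat}(\cdot)_{n\times M}$ mutually consistent, so that $\theta=\ttvec((X\Gamma)^T)$ genuinely inverts to $\text{mat}(\that)_{n\times M}=X\hat\Gamma$ rather than to its transpose.
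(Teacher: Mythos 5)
Your proposal is correct and follows essentially the same route as the paper's proof: the paper likewise argues that the feasibility constraint $(I-H_v)\theta=0$ characterizes $\mathrm{span}(X)$, performs the change of variables $\theta=X\Gamma$ to identify the two problems, and then inverts using $(X^TX)^{-1}X^T$ under the rank condition. The only difference is that you spell out the details the paper leaves implicit (both inclusions of the feasible-set characterization, the Kronecker/vectorization identities matching the two objectives, and injectivity of $\Gamma\mapsto X\Gamma$), which is a welcome but not substantively different elaboration.
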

The estimated coefficient map $\hat{\Gamma}$ is thus a linear transformation of the fitted outcomes $\hat{\theta}$, and inherited its smoothness via our proposed $\ell_1$ penalty. 
%{\color{red}The proof follows from a transformation of variable and can be found in supplementary materials.}

% ADMM
Now to solve \eqref{eq:cons_opt}, we introduce two auxiliary variables and write the original problem in the following form.
\ba{
\min & \quad \frac{1}{2}\|y-\theta\|^2+\delta((I-H_v)\eta=0)+\lambda \|D_v^T\mu\|_{\ell_1} \nonumber \\
s.t. & \quad \theta=\mu, \theta=\eta \label{eq:augmented_admm}
}
where $y=\ttvec(Y^T)$ and $\delta(\cdot)$ is the characteristic function, which takes $0$ if the condition in the parenthesis is satisfied and infinity otherwise. %By separating the least square term, the regularization term and the constraint term, we now use the Alternating Direction Method of Multiplier (ADMM) to solve it. 
The objective function is separable for $\theta, \eta$ and $\mu$, we solve it with the ADMM as summarized in Algorithm~\ref{alg:palm}.
\begin{algorithm}[h]
\caption{ADMM for Total Variation Regularized Tensor-on-scalar Regression}
\label{alg:alm}
\begin{algorithmic}[1]
\STATE {\bf Input}: Design matrix $X$, Images $Y$, %an off-the-shell Graph Fused Lasso solver (GFL); 
tuning parameter $\rho$, error tolerance $tol$.
\STATE Initialize  $\theta^{(0)}, \mu^{(0)}, \eta^{(0)}$ randomly; $U^{(0)}=V^{(0)}=\bm{0}_{nM}$; $k=0$; 
\WHILE{not converge}
\STATE $\theta^{(k+1)}=(\textrm{vec} (Y)+\rho(\eta^{(k)}-U^{(k)}+\mu^{(k)}-V^{(k)}))/(2\rho+1)$;
\STATE $\eta^{(k+1)}= H_v(\theta^{(k+1)}+U^{(k)})$;
\STATE $\mu^{(k+1)} = \arg\min_\mu \lambda\|D_v^T\mu^{(k)}\|_{\ell_1}+\frac{\rho}{2} \|\theta^{(k+1)} +V^{(k)}-\mu^{(k)}\|_F^2$;
\STATE $U^{(k+1)} = U^{(k)} +\theta^{(k+1)} -\eta^{(k+1)}$;
\STATE $V^{(k+1)} = V^{(k)}+\theta^{(k+1)} - \mu^{(k+1)}$;
\STATE $k=k+1$;
%\STATE converge {\bf if} $\max\{ | obj(\theta^{(k+1)}) - obj(\theta^{(k)})|, \|H_v\theta^{(k+1)}\|_F\} < tol$.
\STATE converge {\bf if} $\max\{ \|\theta^{(k+1)}) - \theta^{(k)})\|_F/\|\theta^{(k)})\|_F, \|H_v\theta^{(k+1)}\|_F\} < tol$.
\ENDWHILE
\STATE {\bf Output}: $\Gamma=(X^TX)^{-1}X^T\text{mat}(\theta^{(k+1)})_{n\times M}$;
\end{algorithmic}
\label{alg:palm}
\end{algorithm}

In Algorithm~\ref{alg:palm}, lines 4-6 are the primal variable updates, and lines 7-8 are the dual variable updates. %It is common practice to set $\rho=1$.
The update of $\theta$ at line 4 and the projection step at line 5 both have analytical form and only consist of matrix multiplication. The computational challenge mainly comes from the subproblem at line 6, which can be recognized as a graph-fused lasso (GFL) problem that takes observation $\theta_{k+1}+V_k$, regularization parameter $\lambda/\rho$, and graph incidence matrix $D_v^T$.

Graph-fused lasso(GFL) is a generalization of the fused lasso. It penalizes the first differences of the signal across edges of an arbitrary graph and can be efficiently solved with many off-the-shelf GFL solvers. 
Fused lasso problem with 1-D chained adjacency structure can be solved in $O(n)$ time by the `taut string" algorithm derived by \cite{davies2001local} and the dynamic programming-based algorithm by \cite{johnson2013dynamic}. \cite{Tansey2017} provides an algorithm to solve graph-fused lasso with an arbitrary adjacency graph by decomposition the graph into trials, the problem is solved separately by fast 1D fused lasso solvers for each trial. In our experiments, we implemented the python package provided by \cite{Tansey2017}.
%We used the algorithm in \cite{Tansey2017} in solving the GFL problem.

\begin{comment}

When the graph is a chain graph, the GFL reduces to a 1D fused lasso problem, and can be solved in $O(n)$ time, by, for example, the ``taut string" algorithm derived by \cite{davies2001local} and the dynamic programming based algorithm by \cite{johnson2013dynamic}. These methods are fast in practice. 
For 2D cases, \cite{kolmogorov2016total} generalizes the dynamic programming idea to solve the fused lasso problem on a tree in $O(n\log n)$ time. \cite{barbero2014modular} uses operator splitting techniques like Douglas-Rachford splitting and extends fast 1D fused lasso optimizer to work over grid graphs. Over general graphs structure, numerous algorithms have been proposed in recent years: \cite{chambolle2009total} described a direct algorithm based on a reduction to parametric max flow programming; \cite{chambolle2011first} described what can be seen as a kind of preconditioned ADMM-style algorithm; \cite{kovac2011nonparametric} described an active set approach; \cite{landrieu2016cut} derived a method based on graph cuts. In our experiments, we use the one proposed in \cite{tansey2015fast} which leverages fast 1D fused lasso solvers in an ADMM decomposition over trails of the graph. 
%{\color{red}
One can choose the graph fused lasso solver that best suits the target graph structure. 
\end{comment}

%Despite being a popular algorithm for many constrained optimization problems,
The convergence of ADMM does not have a simple and unified answer. We present the convergence of Algorithm~\ref{alg:palm} in the following theorem. 
\begin{theorem}
Algorithm~\ref{alg:palm} converges linearly to the unique global optimum of problem~\eqref{eq:cons_opt}.
\label{thm:admm_converge}
\end{theorem}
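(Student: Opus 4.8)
The plan is to recognize the update scheme in Algorithm~\ref{alg:palm} as a genuine two-block (possibly over-relaxed) ADMM applied to the split form~\eqref{eq:augmented_admm}, and then to verify the hypotheses of a known linear-convergence theorem for ADMM. First I would group the two auxiliary variables into a single block $z=(\eta,\mu)$: because $\eta$ and $\mu$ appear in separate terms of the objective and in separate consensus constraints, their minimizations decouple, so lines 5--6 constitute one joint block update and line 4 the other. Writing the constraints $\theta=\eta$, $\theta=\mu$ as $A\theta+Bz=0$ with $A=\left(\begin{smallmatrix}I\\I\end{smallmatrix}\right)$ and $B=-I_{2nM}$, the problem takes the standard form $\min_{\theta,z} f(\theta)+g(z)$ subject to $A\theta+Bz=0$, where $f(\theta)=\tfrac12\|y-\theta\|_2^2$ and $g(\eta,\mu)=\delta((I-H_v)\eta=0)+\lambda\|D_v^T\mu\|_{\ell_1}$. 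This reframing matters because it certifies that we are in the two-block regime, for which convergence is guaranteed, rather than the three-block regime where ADMM can diverge --- which is precisely why the naive split fails.

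Next I would record the structural properties that drive linear convergence. The fidelity term $f$ is $1$-strongly convex with a $1$-Lipschitz gradient, since $\nabla^2 f=I_{nM}$. The constraint matrices both have full column rank: $A^TA=2I_{nM}\succ 0$, and $B=-I$ is invertible. The remaining term $g$ is closed, proper, and convex (a subspace indicator plus a weighted $\ell_1$ seminorm), but neither strongly convex nor differentiable. With these facts in hand I would invoke the linear-convergence theory for the generalized/over-relaxed ADMM (e.g.\ Deng and Yin, 2016; see also Nishihara et al., 2015; Giselsson and Boyd, 2017), one scenario of which asks exactly that a single block function be strongly convex with Lipschitz gradient together with suitable full-rank conditions on the constraint matrices, and requires no strong convexity of the second block. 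Equivalently, one can view the scheme as Douglas--Rachford splitting on the dual and use that Douglas--Rachford applied to a strongly monotone plus maximally monotone operator is a contraction. Either route yields a contraction factor strictly below $1$, hence geometric convergence of $\theta^{(k)}$ to a limit, and I would then confirm the relaxation parameter implicit in the scheme lies in the admissible range so that the contraction is preserved.

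To conclude, I would argue uniqueness and optimality. The objective of~\eqref{eq:cons_opt} is strongly convex in $\theta$ (strongly convex fidelity plus convex penalty), and the feasible set $\{\theta:(I-H_v)\theta=0\}=\mathrm{range}(H_v)$ is a nonempty closed subspace; minimizing a strongly convex function over a nonempty closed convex set has a unique minimizer, the global optimum $\that$. The limit point of the iterates satisfies the fixed-point/KKT conditions of~\eqref{eq:augmented_admm}, which coincide with those of~\eqref{eq:cons_opt}, so the limit equals $\that$, and Lemma~\ref{lem:theta2gamma} transfers this to $\hat\Gamma$.

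The hard part will be matching the precise hypotheses of the chosen theorem to our operators. The obstacle is that $g$ is neither strongly convex nor smooth, so the strong convexity of the fidelity term must be ``transferred'' entirely through the rank conditions on the constraint matrices; one must check the exact eigenvalue and full-rank conditions the theorem demands (for $A$, $B$, and implicitly for the projection $H_v$ and the difference operator $D_v$), rather than the more familiar full-row-rank conditions, and verify that grouping $(\eta,\mu)$ into one block does not degrade these conditions. Pinning down the admissible step-size/relaxation interval and the resulting explicit contraction rate is the remaining bookkeeping.
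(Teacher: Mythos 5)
Your proposal is correct and follows essentially the same route as the paper: the paper likewise groups $(\eta,\mu)$ into a single block (justifying the grouping via the orthogonality $A_2^TA_3=0$ of the constraint matrices, in the style of Chen et al.), verifies that $f$ is strongly convex with Lipschitz gradient, that $g_1+g_2$ is closed, proper, and convex, and that the constraint matrices have full column rank, and then invokes Theorem~7 of Nishihara et al.\ (2015) for linear convergence to the unique optimum. Your direct observation that the $\eta$- and $\mu$-updates decouple is just a more immediate way of making the same reduction.
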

To justify the above theorem, notice our transformed Problem \label{eq:augmented_admm} is a two-block problem; we can thus apply the recent results in general two-block problems~\cite{nishihara2015general} to complete the proof. The detailed proof is deferred to supplementary materials.

\subsection{Software}
We provide the open source python code for implementation of Algorithm~\ref{alg:palm} on github  \url{https://github.com/summeryingliu/imagereg}.

\section{Theoretical Results}
\label{sec:consistency}

In this section, we analyze the statistical property of the estimator given by \eqref{eq:obj}. We present the result via an oracle inequality on the prediction error of the regression. The following two quantities were related to our proof, which is widely used in the analysis of sparse recovery problems.
\begin{definition}[Compatibility factor]
Let $D\in \bR^{M \times m}$ be an incidence matrix. The \it{compatibility factor} of $D$ for a set $\emptyset \subsetneq T\subset [m]$ is defined as
\bas{
\kappa_T := \inf_{\theta \in \bR^T} \frac{\sqrt{|T|}\|\theta\|}{\|(\theta D)_T\|_{\ell_1}};\quad \kappa=\inf_{T\subset [m]} \kappa_T
}
\vskip -0.05in
\label{def:compat_factor}
\end{definition}
The compatibility factor gets its name based on the idea that, on the subset of edges indicated by the set $T$, we require the $\ell_1$-norm and the $\ell_2$-norm to be somehow compatible. Compared with other conditions used to derive sparsity oracle inequalities, such as restricted eigenvalue conditions or irrepresentable conditions, the compatibility factor greater than 0 is shown to be weaker by \cite{van2009conditions}. More discussion about the relationship between different conditions can be found in \cite{van2009conditions}. For graphs with bounded degree, it is shown that the compatibility condition is always satisfied.

\begin{lemma}[\citet{hutter2016optimal}]
Let $D$ be the incidence matrix of a graph $G$ with maximal degree $d$ and $\emptyset \ne T\subset E$. Then, $ \kappa_T \ge \frac{1}{2\min\{\sqrt{d},\sqrt{|T|}\}}$.
\label{prop:compat}
\end{lemma}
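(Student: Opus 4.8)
The plan is to reduce the statement to a single uniform upper bound. Since
$\kappa_T = \bigl(\sup_{\theta} \frac{\|(\theta D)_T\|_{\ell_1}}{\sqrt{|T|}\,\|\theta\|}\bigr)^{-1}$,
where $\theta$ ranges over the vertex signals in $\bR^M$ for which $(\theta D)_T\ne 0$, it suffices to show that $\|(\theta D)_T\|_{\ell_1} \le 2\min\{\sqrt d,\sqrt{|T|}\}\,\sqrt{|T|}\,\|\theta\|$ for every such $\theta$. I would establish this through two independent estimates and then retain the smaller constant. Because neither estimate uses a cone restriction on $\theta$, the resulting lower bound is valid regardless of whether the infimum defining $\kappa_T$ is taken over all signals or over a restricted subset, since shrinking the feasible set can only raise the infimum.

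First I would derive the degree-dependent estimate. Applying Cauchy-Schwarz on the $|T|$ active coordinates gives $\|(\theta D)_T\|_{\ell_1} \le \sqrt{|T|}\,\|(\theta D)_T\|$, and deleting the coordinates outside $T$ only decreases the $\ell_2$ norm, so $\|(\theta D)_T\| \le \|\theta D\| \le \|D\|_{\mathrm{op}}\,\|\theta\|$. The key technical ingredient is the spectral identity $\|D\|_{\mathrm{op}}^2 = \lambda_{\max}(DD^T) = \lambda_{\max}(L)$, where $L=DD^T$ is the combinatorial graph Laplacian, combined with the classical bound $\lambda_{\max}(L)\le 2d$. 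I would justify the latter by Gershgorin's disc theorem: each diagonal entry of $L$ equals the degree of its vertex, and the absolute off-diagonal row sum equals that same degree, so every eigenvalue lies in $[0,2d]$. This produces $\|(\theta D)_T\|_{\ell_1} \le \sqrt{2d}\,\sqrt{|T|}\,\|\theta\| \le 2\sqrt d\,\sqrt{|T|}\,\|\theta\|$.

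Next I would derive the cardinality-dependent estimate by a direct edgewise bound: for each edge $e_j=\{i_1,i_2\}$ in $T$ the corresponding entry of $\theta D$ is $\theta_{i_1}-\theta_{i_2}$, whence $|(\theta D)_j| \le |\theta_{i_1}|+|\theta_{i_2}| \le 2\|\theta\|_\infty \le 2\|\theta\|$; summing over the $|T|$ edges gives $\|(\theta D)_T\|_{\ell_1} \le 2|T|\,\|\theta\| = 2\sqrt{|T|}\cdot\sqrt{|T|}\,\|\theta\|$. Combining the two estimates yields $\|(\theta D)_T\|_{\ell_1} \le 2\min\{\sqrt d,\sqrt{|T|}\}\,\sqrt{|T|}\,\|\theta\|$, and inverting gives $\kappa_T \ge \frac{1}{2\min\{\sqrt d,\sqrt{|T|}\}}$. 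The only non-elementary step is the Laplacian eigenvalue bound, so I expect no genuine obstacle beyond invoking that spectral fact correctly; the remainder is Cauchy-Schwarz and the triangle inequality.
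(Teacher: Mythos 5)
The paper itself gives no proof of this lemma --- it is imported verbatim from \citet{hutter2016optimal} --- so there is nothing internal to compare against; your argument stands as a self-contained replacement, and it is correct. Moreover it follows essentially the same route as the proof in that reference: the degree-dependent bound via Cauchy--Schwarz plus $\|D\|_{\mathrm{op}}^2=\lambda_{\max}(DD^T)\le 2d$ (your Gershgorin justification of the Laplacian eigenvalue bound is valid, since each diagonal entry and each absolute off-diagonal row sum of $DD^T$ equal the vertex degree), and the cardinality-dependent bound via the edgewise triangle inequality $|\theta_{i_1}-\theta_{i_2}|\le 2\|\theta\|_\infty\le 2\|\theta\|$, with the minimum of the two constants retained. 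Your side remark that neither estimate uses a cone restriction is also apt, given that the paper's definition of $\kappa_T$ writes $\theta\in\bR^T$ where it plainly means vertex signals $\theta\in\bR^M$.
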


\begin{definition}[Inverse scaling factor]
%\rd need to change to $D^T$\bk
The \it{inverse scaling factor} of an incidence matrix $D$ is defined as 
$
\rho(D): =\max_{j\in [m]} \|s_j\|$,
where $S=(D^T)^\dagger=[s_1^T,\cdots, s_m^T]^T$ is the pseudo inverse of $D^T$.
\label{def:inv_scaling_factor}
\end{definition}
By design of $D_v$, it is clear that $\rho(D_v)=\rho(D)$. 
%In the image denoising problem, one of the most interesting cases is the 2D grid graph. In this special case, we have the following upper bound for the inverse scaling factor.
%\begin{proposition}[\cite{hutter2016optimal}]
%The incidence matrix $D$ of the 2D grid on $M$ voxels has inverse scaling factor $\rho \le C\sqrt{\log M}$.
%\end{proposition}
Now we present the main result.
\begin{theorem}[Oracle Inequality for Projected TV Regression]
Under model \eqref{model} with $\epsilon_{i}\stackrel{i.i.d.}{\sim} \cN(0,\sigma^2I_M)$, define $\theta^*=\ttvec(\Gamma^{*T}X^T) $, $\that$ is the solution for \eqref{eq:cons_opt}. For any $\delta>0$, if $\lambda=\rho\sigma\sqrt{\log(mnM/\delta)}$, then with probability at least $1-\delta$,
\bas{
 \|\theta^*-\hat{\theta}\|_F^2 \le & \inf_{\bar{\theta}\in \bR^{n M}: H_v\bar{\theta}=\bar{\theta}} \left\{ \|\bar{\theta}-\theta^*\|^2+ 4\lambda \|(D_v^T \bar{\theta})_{T^c}\|_{\ell_1} \right\} \\
& \quad + 64\sigma^2\log\left( \frac{2enM}{\delta} \right)+ 8\rho^2\sigma^2\log\left(\frac{mnM}{\delta}\right)\kappa_T^{-2}|T|}
\label{th:consist}
\end{theorem}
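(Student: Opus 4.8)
The plan is to follow the standard route for oracle inequalities of generalized-lasso type estimators (as in the total variation denoising analysis of \citet{hutter2016optimal}), but carried out \emph{inside the feasible subspace} $\mathcal{L}=\{\theta:H_v\theta=\theta\}=\mathrm{range}(H_v)$ so that the linear constraint of \eqref{eq:cons_opt} is respected throughout. First I would record that $\theta^*$, any competitor $\tbar$ with $H_v\tbar=\tbar$, and the minimizer $\that$ all lie in $\mathcal{L}$ (the excerpt already shows $(I-H_v)\ttvec((X\Gamma)^T)=0$), and write $y=\ttvec(Y^T)=\theta^*+\bfeps$. From optimality of $\that$ against the feasible $\tbar$ I get the \emph{basic inequality} $\tfrac12\|y-\that\|^2+\lambda\|D_v^T\that\|_{\ell_1}\le\tfrac12\|y-\tbar\|^2+\lambda\|D_v^T\tbar\|_{\ell_1}$; expanding both squares around $\theta^*$ and cancelling $\tfrac12\|\bfeps\|^2$ yields
\[
\tfrac12\|\theta^*-\that\|^2\le \tfrac12\|\theta^*-\tbar\|^2+\innerprod{\bfeps}{\that-\tbar}+\lambda\big(\|D_v^T\tbar\|_{\ell_1}-\|D_v^T\that\|_{\ell_1}\big).
\]
Everything then hinges on controlling the noise term $\innerprod{\bfeps}{z}$ with $z:=\that-\tbar\in\mathcal{L}$.

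The core is a decomposition of this noise term along $D_v^T$. Let $\Pi$ be the orthogonal projection onto $\mathrm{row}(D_v^T)=\ker(D_v^T)^{\perp}$, so $\Pi=S_vD_v^T$ with $S_v=(D_v^T)^{\dagger}$, and split $\innerprod{\bfeps}{z}=\innerprod{\bfeps}{\Pi z}+\innerprod{\bfeps}{(I-\Pi)z}$. For the penalized part, $\innerprod{\bfeps}{\Pi z}=\innerprod{S_v^T\bfeps}{D_v^Tz}\le \|S_v^T\bfeps\|_{\infty}\,\|D_v^Tz\|_{\ell_1}$; each coordinate of $S_v^T\bfeps$ is Gaussian with variance at most $\sigma^2\rho(D_v)^2=\sigma^2\rho^2$ (using $\rho(D_v)=\rho(D)$ from the excerpt), so a union bound over the $nm$ columns shows the event $\{\|S_v^T\bfeps\|_{\infty}\le\lambda\}$ holds with high probability \emph{exactly} when $\lambda=\rho\sigma\sqrt{\log(mnM/\delta)}$ — this is what fixes the stated choice of $\lambda$. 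For the unpenalized part, $(I-\Pi)z\in\ker(D_v^T)$; invoking $z=H_vz$ confines this direction to the low-dimensional subspace $\ker(D_v^T)\cap\mathcal{L}$ (signals constant on each connected component of $G$ per subject whose per-subject levels lie in $\mathrm{col}(X)$), so $\innerprod{\bfeps}{(I-\Pi)z}\le\|P_W\bfeps\|\,\|z\|$ with $W$ spanning those directions, and a Gaussian concentration/maximal inequality on $\|P_W\bfeps\|^2$ produces the additive $64\sigma^2\log(2enM/\delta)$ term.

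Combining on the intersection of these events, I substitute the noise bound into the basic inequality. Writing $a=D_v^T\tbar$, $b=D_v^T\that$ and splitting $\|\cdot\|_{\ell_1}$ over $T$ and $T^c$, the reverse-triangle inequality collapses $\lambda\|D_v^Tz\|_{\ell_1}+\lambda(\|a\|_{\ell_1}-\|b\|_{\ell_1})$ to $2\lambda\|(D_v^Tz)_T\|_{\ell_1}+2\lambda\|(D_v^T\tbar)_{T^c}\|_{\ell_1}$, which after multiplying through by $2$ gives the $4\lambda\|(D_v^T\tbar)_{T^c}\|_{\ell_1}$ of the statement. I then apply the compatibility factor of Definition~\ref{def:compat_factor} (transferred from $D$ to $D_v$ via the Kronecker structure $D_v=I_n\otimes D$) to get $\|(D_v^Tz)_T\|_{\ell_1}\le\kappa_T^{-1}\sqrt{|T|}\,\|z\|$, and finish with Young's inequality, using $\|z\|\le\|\that-\theta^*\|+\|\theta^*-\tbar\|$ to absorb the $\|\that-\theta^*\|$ contribution into the left-hand $\|\theta^*-\that\|^2$; this turns $\lambda\kappa_T^{-1}\sqrt{|T|}\,\|z\|$ into the variance term $8\rho^2\sigma^2\log(mnM/\delta)\kappa_T^{-2}|T|$.

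The hard part will be the unpenalized/null-space contribution $\innerprod{\bfeps}{(I-\Pi)z}$ under the linear constraint. In plain TV denoising the null space is merely the constants and this term is harmless, whereas here the projection $H_v$ and $\ker(D_v^T)$ do not commute, so the argument must carefully (i) confine $(I-\Pi)z$ to a subspace whose Gaussian complexity grows only logarithmically in $nM$, and (ii) keep constants sharp enough to land at $64\sigma^2\log(2enM/\delta)$. A secondary technical burden is the probabilistic bookkeeping: allocating the failure probability across the two events so the total is at most $\delta$ while reproducing the precise $\log(mnM/\delta)$ and $\log(2enM/\delta)$ factors, together with verifying that the compatibility factor and inverse scaling factor of $D_v$ both reduce to those of $D$ through $D_v=I_n\otimes D$.
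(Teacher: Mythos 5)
Your overall skeleton --- split the noise along $\mathrm{row}(D_v^T)$ versus $\ker(D_v^T)$, calibrate $\lambda$ by a sup-norm bound on $S_v^T\bfeps$, split the penalty over $T$ and $T^c$, invoke the compatibility factor, and finish with Young's inequality --- is the same as the paper's, but two of your specific choices break the argument.

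First, you start from the basic inequality (comparing objective values at $\that$ and $\tbar$), whereas the paper starts from the KKT conditions. This is not cosmetic. From the sign-vector identities $\that^T(y-\that)=\lambda\|D_v^T\that\|_{\ell_1}$ and $\tbar^T(y-\that)\le\lambda\|D_v^T\tbar\|_{\ell_1}$ the paper obtains $(\tbar-\that)^T(\theta^*-\that)\le(\that-\tbar)^T\bfeps+\lambda\|D_v^T\tbar\|_{\ell_1}-\lambda\|D_v^T\that\|_{\ell_1}$, and the polarization identity rewrites the left side so that \emph{both} $\|\theta^*-\that\|^2$ and $\|\tbar-\that\|^2$ appear on the left of the resulting inequality. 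That extra quadratic $\|\tbar-\that\|^2$ is precisely what absorbs the two Young's-inequality contributions $\tfrac12\|\that-\tbar\|^2$ (one from the kernel-part noise, one from the compatibility term), so the oracle term $\|\tbar-\theta^*\|^2$ survives with leading constant exactly $1$. Your basic inequality has no $\|\that-\tbar\|^2$ on the left; your proposed fix --- bound $\|z\|\le\|\that-\theta^*\|+\|\theta^*-\tbar\|$ and absorb into $\|\theta^*-\that\|^2$ --- necessarily inflates the leading constant (the natural weights give $\|\theta^*-\that\|^2\le 3\|\tbar-\theta^*\|^2+\cdots$, and any reweighting that pushes the constant back to $1$ blows up the additive terms). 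So the theorem as stated, a sharp oracle inequality with constant $1$ and the explicit constants $64$ and $8$, is not reachable by your route.

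Second, the kernel part, which you correctly flag as the hard step, is not resolved by your plan. The paper never forms a global projection: it decomposes subject by subject and uses that, for a connected graph, $\ker(D^T)=\ttspan\{1_M\}$ is \emph{one-dimensional}, so each $\|(I-(D^\dagger)^TD^T)\epsilon_i\|$ is a scalar Gaussian, bounded by $2\sigma\sqrt{2\log(2enM/\delta)}$ uniformly over $i$ by a union bound over the $n$ subjects --- that is the sole source of the $\log(2enM/\delta)$ factor. Your plan bounds $\innerprod{\bfeps}{(I-\Pi)z}$ by $\|P_W\bfeps\|\,\|z\|$ for a fixed subspace $W$. But $\ker(D_v^T)$ has dimension $n$, so $\|P_W\bfeps\|\asymp\sigma\sqrt{n}$; and the escape you propose --- confining to $\ker(D_v^T)\cap\mathcal{L}$ --- is unavailable for exactly the reason you concede: $H_v$ and $\Pi$ do not commute, so $(I-\Pi)z$ need not lie in $\mathcal{L}$. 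Even if it did, that intersection is $p$-dimensional (per-subject constants of the form $X\beta$), so its Gaussian complexity is $\sqrt{p}$, not logarithmic. Hence step (ii) of your plan cannot ``land at $64\sigma^2\log(2enM/\delta)$''; it produces an additive term of order $\sigma^2\min(n,pM)$ instead. The missing idea is the paper's per-subject, rank-one treatment of the kernel directions, followed by a union bound across subjects.
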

The proof of Theorem~\ref{th:consist} is inspired by that in \cite{hutter2016optimal} and is deferred to Section~\ref{sec:proof_consistent} in supplementary materials.
\begin{remark}
The length of $\theta$ scales as $nM$. Hence we care about the mean recovery error $\frac{1}{nM}\| \theta^*-\hat{\theta} \|^2$, which converges at rate $O\left( \frac{\log(mnM)}{nM} \right)$. The upper bound exhibits the trade-off between two quantities: the number of non-smooth points $|T|$ and the total variation of the ``smooth" part  $\|(D\theta)_{T^c}\|_{\ell_1}$. To be more specific, given the total variation of the entire data fixed, when the model is piecewise smooth but have drastic change at the non-smooth points, $|T|$ will dominate $\|(D\theta)_{T^c}\|_{\ell_1}$; on the other hand, if the data has few non-smooth points but fluctuate a lot in each piece, the total variation of $\theta$ in $T^c$ will be large compared to $|T|$.
%\rd I don't understand. Why does it have anything to do with sparsity?\bk Note also we do not put any assumption about the sparsity of the predicted mean $\theta$ to achieve the oracle inequality, it thus allows the tradeoff between the number of changing point $|T|$ and the total variation of the ``smooth" part  $\|(D\theta)_{T^c}\|_{\ell_1}$. 
%We will see in the simulation that it succeeds in recovering the piecewise structure in the parameter space.
\end{remark}

If the graph has bounded degree, by Lemma~\ref{prop:compat} we immediately have the following corollary.
\begin{corollary}
If the maximal degree of the penalty graph $G$ is $d$, $\lambda=\rho\sigma\sqrt{\log(mnM/\delta)}$, then with probability at least $1-\delta$,
\bas{
\|\theta^*-\hat{\theta}\|_F^2 \le&  \inf_{\bar{\theta}\in \bR^{n M}: H_v\bar{\theta}=\bar{\theta}} \left\{ \|\bar{\theta}-\theta^*\|_F^2+ 4\lambda \|(D_v^T\bar{\theta})_{T^c}\|_{\ell_1} \right\} \\
&\qquad + 64\sigma^2\log\left( \frac{2enM}{\delta} \right)+ \frac{2\rho^2\sigma^2\log\left(\frac{mnM}{\delta}\right)|T|}{\min\{d,|T|\}}
}
\end{corollary}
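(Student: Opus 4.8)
The plan is to read this as an immediate specialization of Theorem~\ref{th:consist}: every term in the claimed bound is identical to the corresponding term of the oracle inequality except the last one, where the factor $\kappa_T^{-2}|T|$ has been rewritten in terms of the graph degree $d$. So the whole argument reduces to bounding $\kappa_T^{-2}$ from above using the bounded-degree hypothesis, and then re-collecting constants. There is no new probabilistic content; the event of probability at least $1-\delta$ and the prescribed $\lambda=\rho\sigma\sqrt{\log(mnM/\delta)}$ are inherited verbatim from the theorem.

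First I would invoke Theorem~\ref{th:consist} exactly as stated. On the stated event, its bound has the three leading terms (the oracle approximation term $\|\bar\theta-\theta^*\|_F^2+4\lambda\|(D_v^T\bar\theta)_{T^c}\|_{\ell_1}$ and the two variance terms $64\sigma^2\log(2enM/\delta)$ and $8\rho^2\sigma^2\log(mnM/\delta)\kappa_T^{-2}|T|$). The first three of these already coincide with the corollary, so no work is required on them; only the $\kappa_T$-dependent term must be processed.

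Next I would apply Lemma~\ref{prop:compat}. The single point needing care is that the oracle inequality is written in terms of the extended incidence matrix $D_v=I_n\otimes D$, while the lemma is phrased for $D$. Since $D_v$ is the incidence matrix of $n$ disjoint copies of the penalty graph $G$, the disjoint union has the same maximal degree $d$, and the paper already records $\rho(D_v)=\rho(D)$; hence the lemma transfers to the relevant edge subset $T$ of $D_v$ and gives $\kappa_T\ge \tfrac{1}{2\min\{\sqrt{d},\sqrt{|T|}\}}$. Using $\min\{\sqrt{d},\sqrt{|T|}\}=\sqrt{\min\{d,|T|\}}$, squaring and inverting produces an upper bound on $\kappa_T^{-2}$ in terms of $\min\{d,|T|\}$, which I would substitute into the last term of the oracle inequality and then simplify the numerical constant to reach the stated coefficient.

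The only step I expect to require attention is pure bookkeeping: confirming that both graph invariants (maximal degree and inverse scaling factor) pass unchanged from $D$ to $D_v$ so that Theorem~\ref{th:consist} and Lemma~\ref{prop:compat} compose, and tracking the constant through the substitution so that it agrees with the coefficient in the displayed bound. Beyond this, the corollary is a direct consequence of the two previously established results.
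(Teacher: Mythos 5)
Your plan follows exactly the paper's intended derivation --- the paper gives no proof beyond the remark that the corollary follows ``immediately'' from Lemma~\ref{prop:compat} applied to Theorem~\ref{th:consist} --- and your preparatory observations are correct: the probabilistic event and the choice of $\lambda$ carry over verbatim, and $D_v=I_n\otimes D$ is the incidence matrix of $n$ disjoint copies of $G$, so the maximal degree (and hence the lemma) transfers unchanged. The gap is precisely in the step you defer as ``pure bookkeeping'': it cannot produce the displayed coefficient. Lemma~\ref{prop:compat} gives $\kappa_T \ge \frac{1}{2\min\{\sqrt{d},\sqrt{|T|}\}}=\frac{1}{2\sqrt{\min\{d,|T|\}}}$, so squaring and inverting yields $\kappa_T^{-2}\le 4\min\{d,|T|\}$, an upper bound that \emph{grows} with $\min\{d,|T|\}$. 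Substituting into the last term of Theorem~\ref{th:consist} gives
\[
8\rho^2\sigma^2\log\left(\tfrac{mnM}{\delta}\right)\kappa_T^{-2}|T| \;\le\; 32\,\rho^2\sigma^2\log\left(\tfrac{mnM}{\delta}\right)|T|\min\{d,|T|\},
\]
that is, $\min\{d,|T|\}$ lands in the numerator with constant $32$, not in the denominator with constant $2$ as the corollary asserts.

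To reach the stated term $\frac{2\rho^2\sigma^2\log(mnM/\delta)|T|}{\min\{d,|T|\}}$ one would need $\kappa_T^{-2}\le \frac{1}{4\min\{d,|T|\}}$, i.e. $\kappa_T\ge 2\min\{\sqrt{d},\sqrt{|T|}\}$, which is the \emph{reciprocal} of what the lemma guarantees and is not available. So your derivation, carried out faithfully, proves only the weaker bound $32\rho^2\sigma^2\log(mnM/\delta)|T|\min\{d,|T|\}$; since $\min\{d,|T|\}\ge 1$, this is never smaller than the claimed quantity, and no amount of constant-tracking closes the gap. The honest conclusions are either that a genuinely different argument (not via Lemma~\ref{prop:compat}) is needed for the division form, or --- more plausibly --- that the corollary as printed contains an error arising from inverting the lemma's bound the wrong way, and the correct statement has $|T|\min\{d,|T|\}$ in place of $|T|/\min\{d,|T|\}$. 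Your proposal, by asserting that the substitution ``reaches the stated coefficient,'' papers over exactly this failure point.
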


%In particular, if the features has isotropic covariance matrix, 
The following Corollary F utilizes Theorem D to infer a bound on the parameter estimation.
\begin{corollary}
If  $\frac{1}{n}X^TX=I_p$, and $\lambda=\rho\sigma\sqrt{\log(mnM/\delta)}$, then with probability at least $1-\delta$,
\bas{
 \frac{1}{Mp}\|\hat{\Gamma}-\Gamma^*\|_F^2 \le &  \inf_{\bar{\Gamma}\in \bR^{p\times M}} \left( \frac{1}{Mp}\|\bar{\Gamma}-\hat{\Gamma}\|_F^2+ \frac{4\lambda}{ nMp} \|(X\bar{\Gamma}D)_{T^c}\|_{\ell_1} \right)  \\
&\quad + \frac{64\sigma^2\log\left( \frac{2enM}{\delta}\right) }{nMp }+ \frac{8\rho^2\sigma^2\log\left(\frac{mnM}{\delta}\right)|T| }{nMp\kappa_T^2}
}
\label{cor:gamma_oracle}
\end{corollary}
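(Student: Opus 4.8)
The plan is to obtain Corollary~\ref{cor:gamma_oracle} directly from Theorem~\ref{th:consist} by a change of variables from the fitted-outcome parameter $\theta$ to the coefficient matrix $\Gamma$, exploiting the whitening assumption $\frac{1}{n}X^TX=I_p$. Under this assumption $X$ acts as an isometry up to the factor $\sqrt{n}$, so every squared Frobenius norm appearing in the $\theta$-bound simply rescales by $n$ when rewritten in terms of $\Gamma$; dividing the resulting inequality by $nMp$ then produces the per-entry normalization in the statement.

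First I would record the dictionary between $\theta$ and $\Gamma$. By Lemma~\ref{lem:theta2gamma} together with the feasibility constraint $H_v\that=\that$ of \eqref{eq:cons_opt}, the fitted value satisfies $\text{mat}(\that)_{n\times M}=X\hat{\Gamma}$; likewise $\theta^*=\ttvec(\Gamma^{*T}X^T)$ gives $\text{mat}(\theta^*)_{n\times M}=X\Gamma^*$. Hence
\bas{
\|\that-\theta^*\|_F^2 = \|X(\hat{\Gamma}-\Gamma^*)\|_F^2 = \tr\big((\hat{\Gamma}-\Gamma^*)^T X^TX (\hat{\Gamma}-\Gamma^*)\big) = n\|\hat{\Gamma}-\Gamma^*\|_F^2,
}
the last equality using $X^TX=nI_p$. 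Dividing by $nMp$ turns the left-hand side of Theorem~\ref{th:consist} into $\frac{1}{Mp}\|\hat{\Gamma}-\Gamma^*\|_F^2$, which is the target quantity.

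Next I would translate the oracle infimum. The feasible set $\{\bar\theta\in\bR^{nM}:H_v\bar\theta=\bar\theta\}$ is exactly the image of $\bR^{p\times M}$ under the map $\bar\Gamma\mapsto\bar\theta=\ttvec((X\bar\Gamma)^T)$, which is a bijection since $\mathrm{rank}(X)=p$; the two infima therefore agree term by term. The fidelity term rescales as above, $\|\bar\theta-\theta^*\|^2=n\|\bar\Gamma-\Gamma^*\|_F^2$. For the total-variation term, the Kronecker identity $(A\otimes B)\ttvec(C)=\ttvec(BCA^T)$ applied with $D_v=I_n\otimes D$ gives $D_v^T\bar\theta=\ttvec((X\bar\Gamma D)^T)$, so that $\|(D_v^T\bar\theta)_{T^c}\|_{\ell_1}=\|(X\bar\Gamma D)_{T^c}\|_{\ell_1}$ under the corresponding identification of the index set $T$. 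Substituting these identities, dividing through by $nMp$, and carrying the two additive error terms unchanged (they are constants in the optimization) yields the claimed bound.

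The algebra is routine; the only real care is the vectorization bookkeeping, i.e.\ keeping the transpose convention of $\text{mat}(\cdot)$ consistent across Lemma~\ref{lem:theta2gamma}, the constraint $H_v\bar\theta=\bar\theta$, and the Kronecker rewriting of $D_v^T\bar\theta$, so that the isometry factor $n$ and the set $T$ line up on both sides. I would also flag that the reference term inside the infimum should read $\|\bar\Gamma-\Gamma^*\|_F^2$, mirroring $\|\bar\theta-\theta^*\|^2$ in Theorem~\ref{th:consist}; the $\hat{\Gamma}$ printed in the statement of Corollary~\ref{cor:gamma_oracle} appears to be a typographical slip, as no step of this change of variables produces a distance to $\hat{\Gamma}$.
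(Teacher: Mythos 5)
Your proposal is correct and takes essentially the same route as the paper's own proof: both rest on Lemma~\ref{lem:theta2gamma}'s identification $\text{mat}(\hat\theta)_{n\times M}=X\hat{\Gamma}$, the isometry $\|\hat\theta-\theta^*\|^2=\tr\bigl((\hat{\Gamma}-\Gamma^*)^TX^TX(\hat{\Gamma}-\Gamma^*)\bigr)=n\|\hat{\Gamma}-\Gamma^*\|_F^2$ from $X^TX=nI_p$, and the term-by-term translation of the infimum (matching $D_v^T\bar\theta$ with $X\bar{\Gamma}D$) before normalizing by $nMp$. Your closing flag is also correct: since Theorem~\ref{th:consist}'s fidelity term $\|\bar\theta-\theta^*\|^2$ translates to $n\|\bar{\Gamma}-\Gamma^*\|_F^2$, the $\|\bar{\Gamma}-\hat{\Gamma}\|_F^2$ appearing in both the corollary's statement and the paper's proof is a typographical slip for $\|\bar{\Gamma}-\Gamma^*\|_F^2$.
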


\begin{remark}
The condition in Corollary~\ref{cor:gamma_oracle} can be achieved by normalizing the input design matrix. When the covariance matrix for the features is not isotropic, the error term should be represented in Mahalanobis distance instead, that is, the error along different axis needs to be reweighted by the variance in that direction.  
\end{remark}

\section{Simulation Studies}
\label{sec:simu}

In this section, we present the simulation results on recovering signal $\Gamma$ with various methods. Since the goal is to accurately estimate $\Gamma$, we use the coefficient-mean deviation as the performance metric, which is defined as $\frac{1}{\sqrt{Mp}}\|\hat{\Gamma}-\Gamma^*\|_F$. The standard deviations of the metric over replicates are reported in the parenthesis.
We conduct experiments on 1D and 2D synthetic data.  %and the envelope method proposed in~\citet{Cook2010}. 
%In the following we present the performance of our proposed methods and compare with state-of-art methods in 1-D and 2-D simulations.
%Therefore we do not compare with the line of research which is based on thresholding the $p$-values or $t$-statistics, and focus on the methods that are able to estimate the coefficient maps.

\subsection{One Dimensional Simulation}
\label{sec:1dsimu}
For 1D data, we mainly compared with the state-of-art Functional-on-Scalar regression methods (FoSR), it is known for good performance for 1D data, especially for data generated from its true basis assumption.  We generate two simulation scenarios, each with 200 replications. For each scenario, we present results for sample sizes of 25, 50, and 100.

In the first setting, we generate model that satisfies the assumptions of function-on-scalar regression, i.e., the signals are sparse if expanded in Fourier basis. The predictors are generated as following: $(X_1, X_2)$ are drawn from categorical distribution that takes value $(1,0)$ with probability $1/4$, $(0,1)$ with probability $1/4$, and $(0,0)$ with probability $1/2$. $X_3$ is drawn from standard normal distribution. The true signals are from Fourier basis functions, with index $t$ ranges from $1$ to $200$, and $$Y=0.3 \sin (\pi t/100)+0.5 X_1 \cos (\pi t/100)-0.3 X_2 \sin (\pi t/50)+ 0.5 \cos(\pi t/25) + 2 \mathcal{N}(0,1).$$ We compared our proposed method with FoSR using b-spline and Fourier basis functions. 
%{\color{red} Are we sure we want to include the 1-d envelope? I feel this result will raise a ton of questions from the reviewers.. }
%We also compare with the 1-D envelope method in~\cite{Cook2010}.
%version of parsimonious tensor response regression. 
%Notice that scenario 1 is in favor of the FoSR methods since it is generated from the true model assumption with Fourier basis.  %The Parsimonious Tensor Response Regression~\citep{li2017parsimonious} is proposed only for tensor regression problem and not intended for 1D function-on-scalar regression problem.

Table \ref{1Dsimu} presents the simulation results for scenario 1. % The mean squared true gamma is $0.086$. %Although envelope methods were proposed to the tensor regression problem by \cite{li2017parsimonious}, it has not appeared in literature for one-dimensional function-on-scalar regression problems. We implemented the envelope method with the cook's Matlab package. Envelope methods performed not as good as the other. 
When sample size is $25$, our proposed GFMR performs better than all the competitors including FoSR with the true Fourier basis, in terms of smaller mean and standard deviation. Our method performs similarly with the FoSR methods when sample size is $50$. With large sample size $100$, the FoSR methods has a better performance. The Fourier basis function performs slightly better then b-spline basis, however the difference is not phenomenal. FoSR is implemented in R package \texttt{refund} and we applied it with $L_2$ penalty, where the tuning parameter is selected through cross validation from a grid of $(2^{-5},2^{-3},2^{-1},2,2^3,2^5)$. 
%\begin{table}[ht]
%\centering
%\footnotesize
%\caption{Mean and standard devision of mean squared error for estimated gamma for 1-D settings}
%\begin{tabular}{|r|llll|}
%  \hline
%Setting1 & RRTVTR & fosr\_bspline & fsor\_fourier & Envelope \\ 
%  \hline
%25 &{\bf 0.045(0.008) }& 0.054(0.015) & 0.053(0.015) & 0.392(0.172) \\ 
%  50 & 0.024(0.004) & 0.024(0.006) & 0.024(0.006) & 0.225(0.047) \\ 
%  100 & 0.015(0.002) & 0.012(0.003) &{\bf 0.011(0.003)} & 0.132(0.017) \\ 
%   \hline
%\end{tabular}
%%\vspace{1cm}
%\\
%\begin{tabular}{|r|lllll|}
%%\hspace{-1cm}
%\hline
%Setting 2 & RRTVTR&  RRTVTR\_addedge & fosr\_bspline & fsor\_fourier & Envelope \\ 
%  \hline
%25 & 0.0755(0.0086) & {\bf 0.0334(0.0072)} & 0.0808(0.0145) & 0.0783(0.0152) & 0.432(0.1663) \\ 
%50 & 0.0414(0.0043) &{\bf 0.0197(0.0040)} & 0.0511(0.0059) & 0.0491(0.0058) & 0.2526(0.0445) \\ 
%100 & 0.0199(0.0023) & {\bf 0.0118(0.0020)}& 0.0389(0.0028) & 0.0366(0.0038) & 0.1524(0.0161) \\ 
%   \hline
%\end{tabular}
%\vspace{0.5cm}
%\label{1Dsimu}
%\end{table}

\begin{table}[ht]
\centering
\footnotesize
\caption{Mean Deviation of  Coefficients for 1D settings}
\begin{tabular}{l|p{1cm}|llll}
  \hline
& sample size    & GFMR         &periodic\_GFMR            & FoSR\_bspline         & FoSR\_fourier     \\ 
  \hline
\multirow{3}{*}{Setting 1}      &25         &{\bf 0.045(0.008) }    & - &0.054(0.015)     & 0.053(0.015)          \\ 
                      &50         & 0.024(0.004)         & - &0.024(0.006)     & 0.024(0.006)          \\ 
                    &100     & 0.015(0.002)         & - &0.012(0.003)     &{\bf 0.011(0.003)}      \\ 
   \hline
 \multirow{3}{*}{Setting 2}    &25         & 0.076(0.009)         & {\bf 0.033(0.007)}     & 0.081(0.015) & 0.078(0.015)\\ 
                     &50         & 0.041(0.004)         &{\bf 0.020(0.004)}     & 0.051(0.006) & 0.049(0.006)  \\ 
                    &100     & 0.020(0.002)         & {\bf 0.012(0.002)}    & 0.039(0.003) & 0.037(0.004) \\ 
   \hline
\end{tabular}
%\vspace{0.5cm}
\label{1Dsimu}
\end{table}
%In the second scenario, we consider a case where the assumptions in fosr model is violated.
The second simulation setting showed the advantage of our methods when the FoSR models are misspecified. It also demonstrates the advantage of our proposed method in the flexibility for defining arbitrary adjacency structure with a self-defined incidence matrix.  
We generate piecewise constant signals, and in addition, the signals for indexes 1 to 100 are identical with signals in 101 to 200. This setting is motivated by the scenario in time series observations, where one may observe the same pattern over repeated time periods.  For example, one may observe repeated patterns in activity data or task fMRI data. %periodicity that correspond to day of the week or season of the year might be observed in subjects' levels of physical activity. Similar brain activation time series might also manifest during a task fMRI experiment when the stimulations were given repeatedly.
We use this setting to demonstrate the performance of incorporating some prior knowledge for the repeated patterns is better than not using this knowledge. % we can incorporate the prior knowledge into the analysis, and in particular, it helps to improve the efficiency in a small sample size. 

We consider two regimes in defining the edges. The GFMR only assumes edges connecting measurements in adjacent times: $E=\{(i,i+1), i=1,\dots,199\}$. The periodic\_GFMR adds 100 more edges that connect the pairs of measurements with a time lag of $100$, i.e., $(i, 100+i)$, which utilize the prior information that the signal is periodic.
Therefore, the edge set of periodic\_GFMR is $E=\{(i,i+1), i=1,\dots,199\} \cup \{(i,i+100),i=1,\dots,100\}$. 
The true signal is generated by the following model:%simulated that for the $j$th outcome of $i$th sample as 
\bas{
&Y_{ij} = \mathbb{I}(1\le j\le 20)+ \mathbb{I}(101\le j \le 120)+ 0.5 X_{1i} \left(\mathbb{I} (31\le j \le 70)+\mathbb{I}(131\le j \le170)\right) \\
& -X_{2i} \left(\mathbb{I}(71 \le j \le 80)+\mathbb{I}(171 \le j \le 180)\right) + X_{3i} \left(\mathbb{I}(61 \le j \le 100)+\mathbb{I}(161 \le j \le 200)\right) + 2\mathcal{N}(0,1).
}
The results are shown in Table \ref{1Dsimu}. 
We can see the proposed GFMR, and periodic\_GFMR outperforms the other methods. As sample size increases, all the methods improves with smaller mean and SD for the mean deviation. By adding the additional edges, our proposed method shows better performance, and the advantage is more phenomenal with a small sample size. The proposed method, with added edges to incorporate prior knowledge and encourage similar patterns, reduced the mean deviation to less than half for the one estimated without added edges. %It also demonstrates the advantage of our proposed method in the flexibility for defining arbitrary adjacency structure with self-defined incidence matrix

\subsection{Two Dimensional Simulation}
\label{sec:2dsimu}
\begin{table}
\centering
\footnotesize
\caption{Mean Deviation of Coefficients for 2-D settings}
\vskip 0.1in
\begin{tabular}{c|p{1cm}|lllll}
\hline
        & sample size    & GFMR        & Envelope     & FoSR    & TV\_OLS    &OLS\_TV \\
        \hline
Setting 1    & 25            & {\bf 0.298 (0.038)}     & 0.349(0.028)        & 0.357 (0.019)         & 0.501 (0.044)         & 0.424 (0.048)\\
        & 50            & {\bf 0.217 (0.022)}     & 0.241(0.013)        &0.339 (0.010)            &0.361 (0.028)         &0.313(0.033)\\
        & 100        & 0.200 (0.016)        & {\bf 0.169(0.007)}    &0.330 (0.004)             &0.292 (0.017)         &0.230 (0.013)\\
        \hline
Setting 2    & 25            &{\bf 0.265 (0.008)}     & 0.561(0.045)        &0.321 (0.003)             & 0.275 (0.006)     & 0.919 (0.087)\\
        & 50            & {\bf 0.192 (0.007)}    & 0.387(0.019)        &0.318 (0.001)             & 0.266 (0.003)     & 0.545 (0.033)\\
        & 100        & {\bf 0.126 (0.006)}     & 0.267(0.010)        &0.318 (0.001)             &0.259 (0.003)        & 0.277 (0.012)
\\\hline
\end{tabular}
\label{table2}
\end{table}
For 2D data, %FSoR is compared, but it is not state-of-art method because the difficulties to choose appropriate basis and computational burden. Thus we compare with more methods to demonstrate the advantage of our proposed method, i.e. the Parsimonious Tensor Response Regression~\citep{li2017parsimonious}, and two-step procedures which separate the regression and TV denoising steps 
We compared the proposed method GFMR with the following methods. We consider the set of edge from the grid graph. We provide a function (creategrid) to generate adjacency matrix for grid graph of 1-D to 3-D applications in the software package \url{https://github.com/summeryingliu/imagereg}. 
\begin{itemize}
\item \texttt{Envelope}: Parsimonious Tensor Response Regression \citep{li2017parsimonious}, implemented with matlab package provided by the authors.
\item  \texttt{FoSR}: A Variational Bayes implementation for penalized splines \citep{goldsmith2016assessing}, implemented using a Variational Bayes algorithm in R package \texttt{refund}.
\item \texttt{TV\_OLS} : A two-stage method: 1) apply TV denoising seperately for each image, 2) conduct a voxel-wise linear regression on the denoised images.
%\item \texttt{OLS\_gaussian}:  A gaussian-kernel smoothed estimator from voxel-wise OLS regression.
\item  \texttt{OLS\_TV}: A two-stage method: 1)Voxel-wise linear regression to get the noisy coefficient maps; 2) apply TV-denoising for each coefficient map.
\end{itemize}

\begin{figure}[t]
\small
\begin{tabular}{cc}
\hspace{-1em}\includegraphics[width=0.52\textwidth]{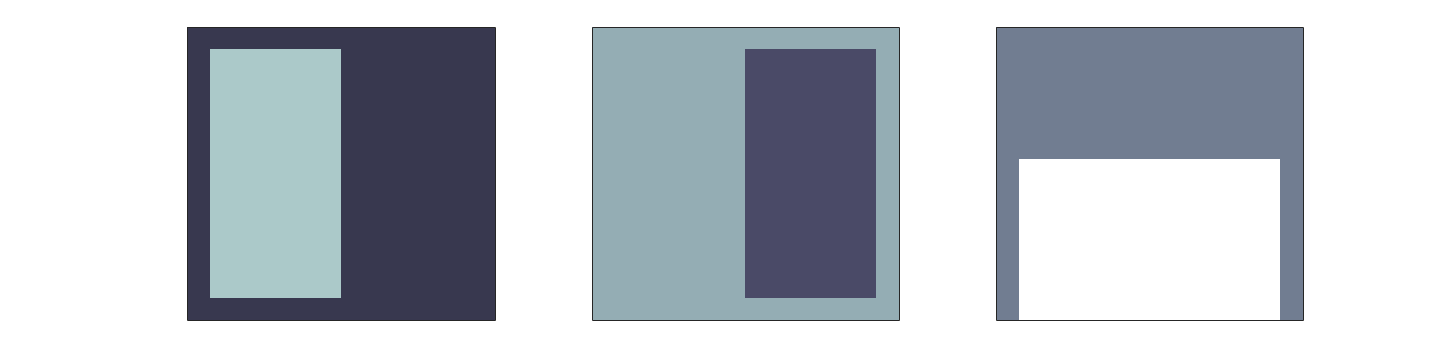} \hspace{-2em}         &\includegraphics[width=0.52\textwidth]{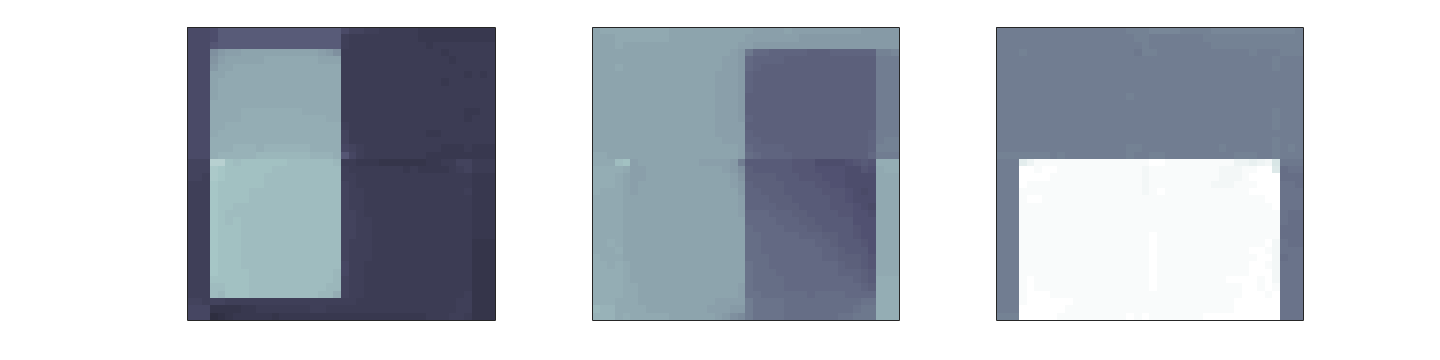}    \\
(a) Ground truth & (b) GFMR\\
\hspace{-1em}\includegraphics[width=0.52\textwidth]{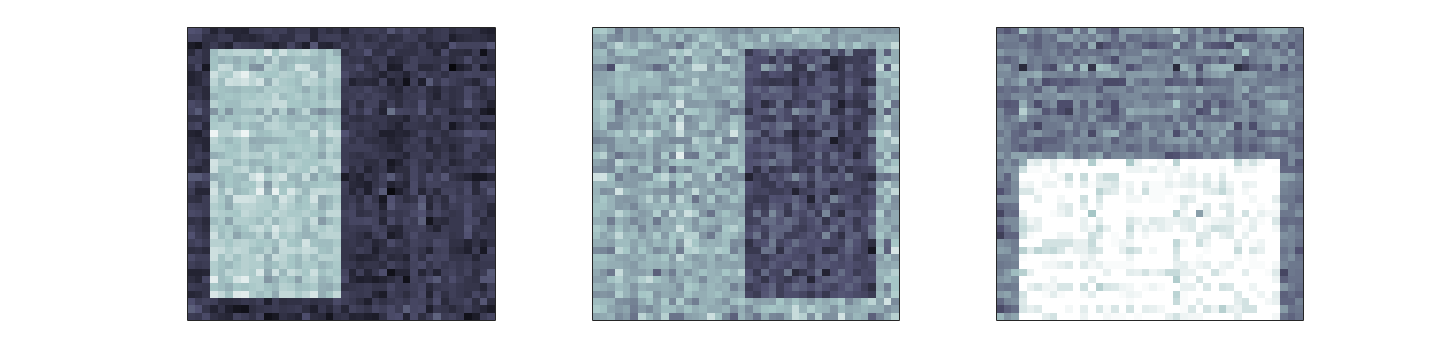} \hspace{-2em}         &\includegraphics[width=0.52\textwidth]{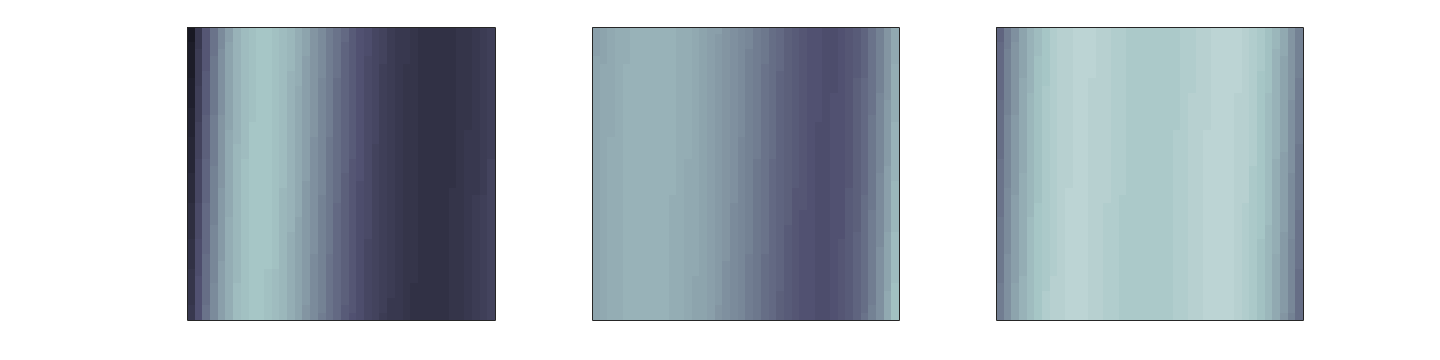}    \\
(c) Envelope & (d) FoSR\\
\hspace{-1em}\includegraphics[width=0.52\textwidth]{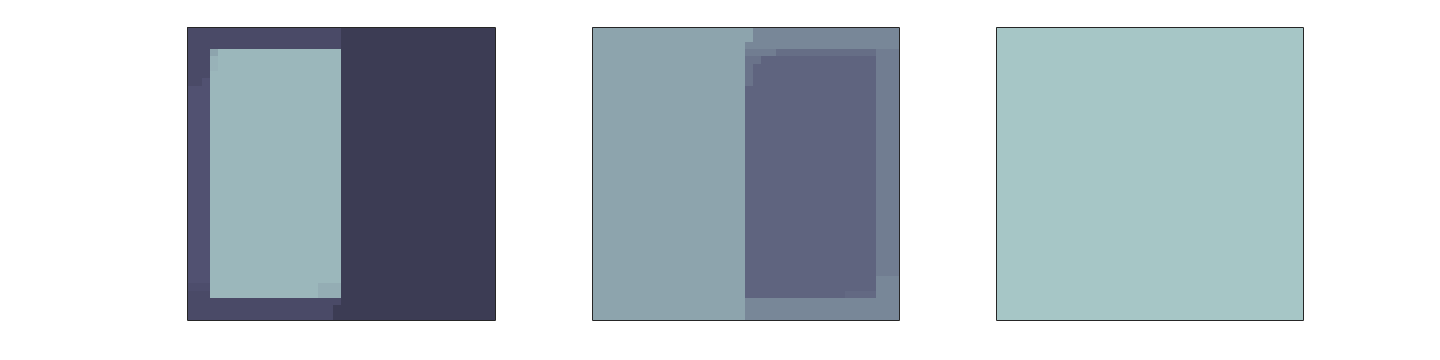} \hspace{-2em}     &\includegraphics[width=0.52\textwidth]{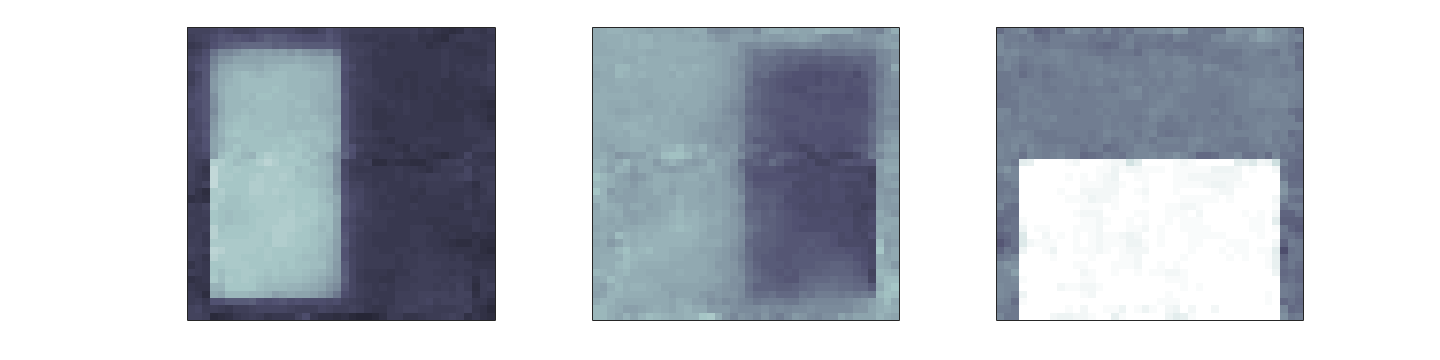}    \\
(e) Two step: OLS\_TV & (f) Two step: TV\_OLS
\end{tabular}
\vspace{1cm}
\caption{Coefficient Maps for Simulation Setting 1 in 2D.}
\label{coefexample1}
\end{figure}

\begin{figure}[t]
\small
\begin{tabular}{cc}
\includegraphics[width=0.44\textwidth]{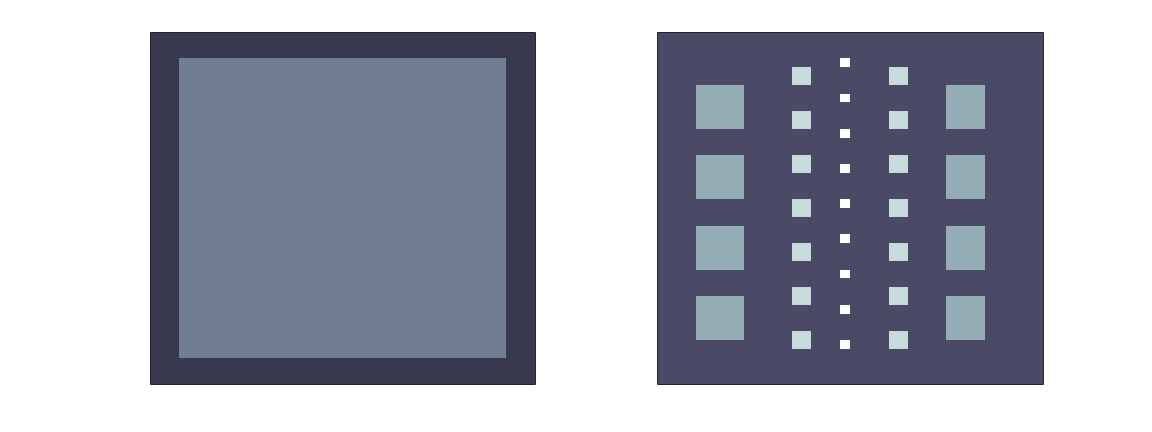}        &\includegraphics[width=0.44\textwidth]{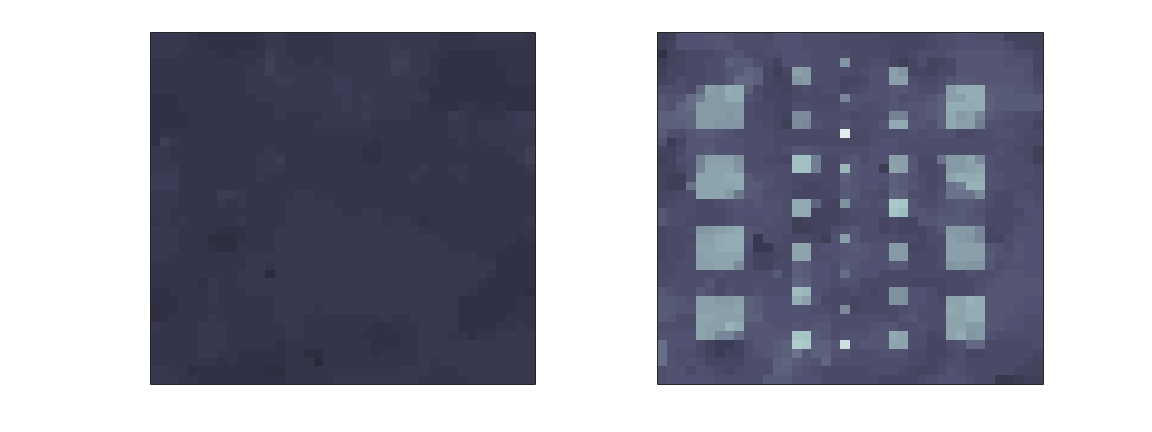}    \\
(a) Ground truth & (b) GFMR\\
\includegraphics[width=0.44\textwidth]{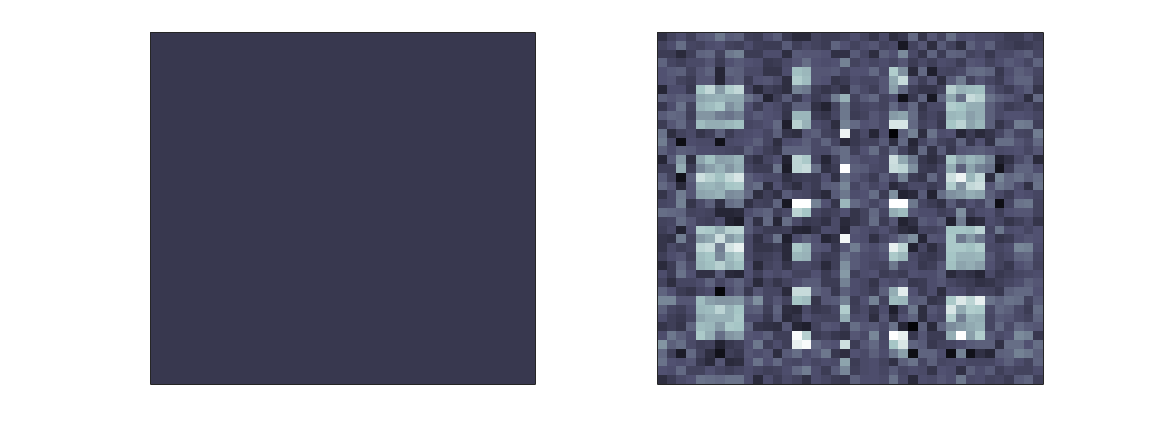}        &\includegraphics[width=0.44\textwidth]{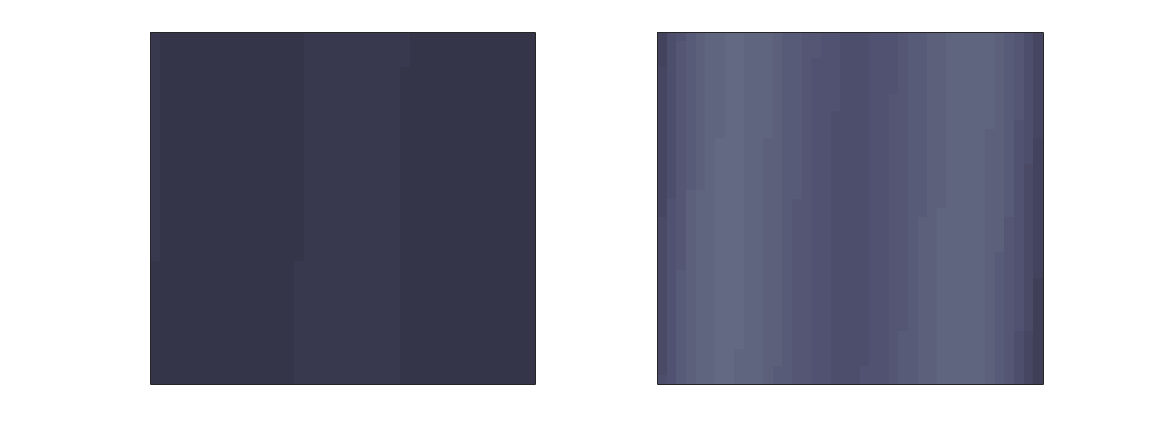}    \\
(c) Envelope & (d) FoSR\\
\includegraphics[width=0.44\textwidth]{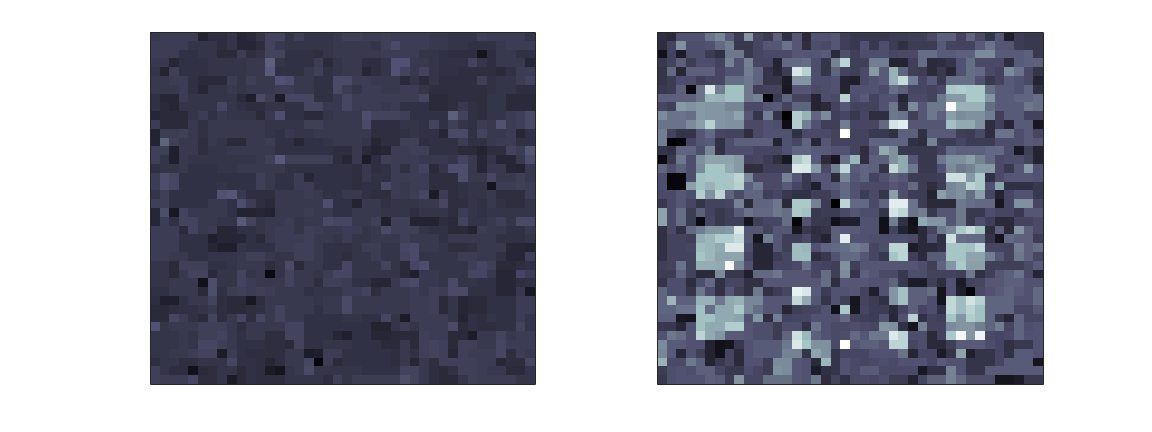}        &\includegraphics[width=0.44\textwidth]{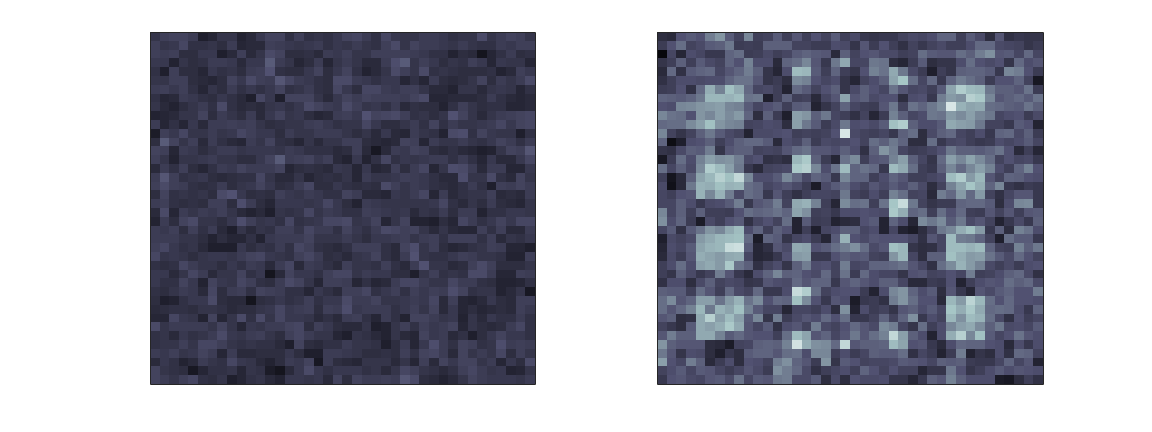}    \\
(e) Two step: OLS\_TV & (f) Two step: TV\_OLS
\end{tabular}
\vspace{1cm}
\caption{Coefficient Maps for Simulation Setting 2 in 2D.}
\label{coefexample2}
\end{figure}

We consider two different settings in the simulation. 
 For the first setting, the predictors are generated as the following: $(X_1, X_2)$ are drawn from categorical distribution that takes value $(1,0)$ with probability $1/4$, $(0,1)$ with probability $1/4$, and $(0,0)$ with probability $1/2$.  And an integer valued variable $X_3$ is generated uniformly over integers from $56$ to $75$ to represent the age of a patient.
%In setting 1, we generate two binary variables $X_1$ and $X_2$, where $X_1,X_2$ takes values on $(0,0)$, $(1,0)$ and $(0,1)$ with equal probability.  We also generated an uniformly distributed continuous random variable $X_3$ on $[0,80]$, we see this type of covariates in subject covariates such as age.

The image size is $m_1=40$ by $m_2=40$,  the true signals are block-size constant with large block sizes, their coefficient maps ($\Gamma_1,\Gamma_2,\Gamma_3$) are shown in Figures~\ref{coefexample1} (a). And the outcome is generated by %A white noise with variance $4$ is added to the true means to generate the training images. 
\bas{
Y = X_1\Gamma_1 + X_2\Gamma_2 + X_3\Gamma_3+\epsilon,
}
where $\epsilon\in \bR^{m_1\times m_2}$, and $\epsilon_{ij} \stackrel{i.i.d.}{\sim} \cN(0,2)$.

%Setting 2 further demonstrated the advantage of the proposed methods in detecting various sizes of signals. 
In the second setting, two binary variables $X_1$ and $X_2$ are generated the same as in setting 1. This setting has various block sizes, the coefficient map of $X_1$ is generated to have active regions with different sizes: $1$, $4$,  $25$ pixels, and the true coefficients are $2$, $1.5$ and $1$, respectively. Please refer to Figure \ref{coefexample2} (a) for the actual arrangements of these true signals. %We generate 200 replications to compare the performance at three sample sizes, 25,50,100.

%The mean and  standard deviation over 200 replications of this metric are shown in Table \ref{table2}.
%In Figure~\ref{coefexample1} and~\ref{coefexample2}, we present one example of the estimation for various methods  with training sample size of 100.  %It would be interesting to compare with this method if implementation for higher-dimensional functionals were available.

Table~\ref{table2} demonstrates the clear advantage of our proposed methods in recovering the coefficient maps for both settings in terms of a smaller mean deviation of the estimated coefficient from the truth. With increasing sample sizes, all methods perform better. When the sample size is 100, the envelope method has a better performance in setting one which large block size for all signals.  The envelope method is outperformed by our proposed method and  \texttt{TV\_OLS} (regression after TV denoising) for setting 2 with all three sample sizes when the block has various sizes. 

We further illustrate the estimated coefficient map from various methods in Figure~\ref{coefexample1} and~\ref{coefexample2} with one example in sample size 25.  Our proposed method achieves a cleaner cut on the boundary and smoother inside the signal blocks. The envelope method is able to capture and display the main pattern, however, does not encourage sparsity of the edges. Thus the solution is not ``smooth" anywhere. %The FoSR is designed for one-dimensional functional, so we implemented it with the input as the vectorization of the image; this is the main reason why it only maintains the continuity along with horizontal directions and fails to detect other types of structure. 
There is no public available package/code to conduct 2D FoSR as far as we know, so implemented it with input as vectorization of the 2D outcome, the performance is thus not ideal.

Our proposed method is better than the two-step methods in all settings because it is simultaneously conducting regression and total variation smoothing. Thus it can utilize all samples to capture signals with various block sizes. In comparison, if one conducts 'smoothing' before regression (TV\_OLS), the signal in small blocks is more likely to be 'smoothed' away in the image level. This intuitive explanation is consistent with the simulation result that the mean deviation of TV\_OLS does not improve much when sample size increases in setting 2, which contains various block sizes.  Furthermore, the estimated coefficient maps of TV\_OLS are more 'vague' at the boundaries, since the smoothing step is done separately for individual images, the edges are not overlapping across the smoothed images. 
On the other hand, if one conducts 'smoothing' after regression (OLS \_TV), the random noise of voxel-wise coefficients and signals of small effect size will be 'smoothed' in the same step, thus hard to differentiate. With the increasing sample size, the variability of coefficients across pixels will be smaller. However, our simulation shows our proposed method performs much better than OLS\_TV in all settings and the presented sample sizes.

\section{Physical Activity Data}
\label{sec:real}
The first example came from the accelerometry subsample of a large community-based study of mood disorders, the National Institute of Mental Health Family Study of Affective Spectrum Disorder \citep{Merikangas2014}. The study collected minute-level physical activity measures (i.e., activity counts) from about 350 participants using the Philips Respironics Actiwatch for continuous 24 hours over 2 weeks. The outcome measures, time series of activity counts, are densely sampled ($>$ 20,000 values per participant) and highly noisy. Hence we aim to enhance our power in assessing the association between physical activity patterns with covariates of interest such as demographic characteristics and health-related factors by simultaneously denoising the data with regression modeling. 

The activity counts were transformed as log (counts+1) given the skewness \citep{Shou2017}, and pre-smoothed with a $60$-minute sliding window. %smoothed the outcome data using moving average filter over a $60$-minute window, that is, averaged across an hour.
Under the assumption that physical activity status (e.g., being inactive, moderately or vigorously active) has temporal continuity despite the noisy level shown in the raw data (Figure \ref{raw}), we believe that such data could benefit from GFMR. The covariates include diagnosis, age, gender, body mass index (BMI), and day of the week. In particular, the diagnosis was determined by DSM-IV and contained 5 exclusive groups: healthy control, type I bipolar (BPI), type II bipolar (BPII), major depressive disorder (MDD), and other disorders. %The associations between time-varying physical activity intensities, and each of these scalar variables was estimated using our proposed model. 
To make the results more clinically interpretable, we categorized BMI into underweight ($<18.5$), normal ($18.5\le$BMI$\le25$), overweight ($25<$BMI$\le30$) and obese (BMI$>30$), with the normal participants being the reference group. Similarly, age was also stratified into four categories: adolescence (under 18), adulthood (18 to 40), middle-age (40 to 60), and elderly ($\ge 60$). In total, there were 4214 daily activity curves, and the observations were balanced for the 7 days of the week (Monday through Sunday). 

\begin{figure}[h!]
%\vskip 0.05in
 \centering
%\small
\begin{tabular}{cc}
\includegraphics[width=0.4\textwidth]{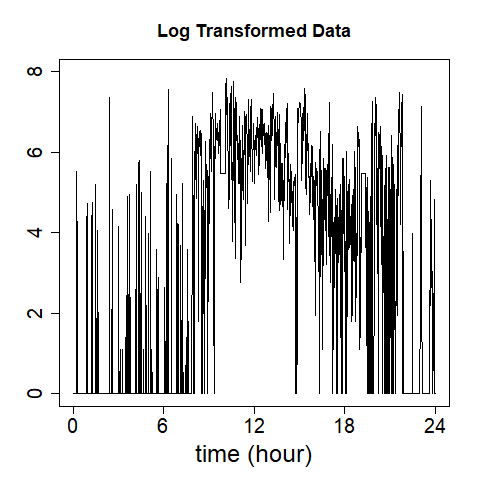}\hspace{-.6em}&
\includegraphics[width=0.4\textwidth]{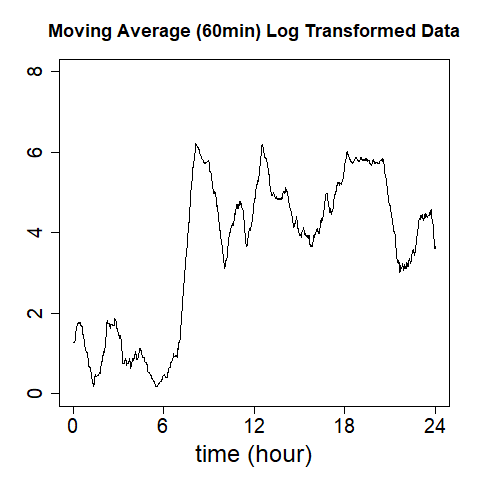} \hspace{-.6em}
\end{tabular}
\vspace{1cm}
\caption{An example of the observed daily activity counts after log transformation with (right) and without (left) pre-smoothing with moving average. }
\label{raw}
\end{figure}

We consider a edge set of all pairs of adjacent time stamps. The estimated time-dependent regression coefficients are shown as red curves in Figure \ref{fig:asir1}. The corresponding point-wise confidence bands were obtained as the upper and lower $2.5\%$ quantiles over $100$ bootstrap samples. Compared with the middle-age group, seniors tend to have an overall lower intensity throughout the day, especially from late afternoon to evening. Both adolescent and adult groups have shown higher night-time activity and lower early morning activity, but the difference is more prominent between adolescent and middle-age group. Participants who were obese or overweight were observed to have lower day-time activity and higher night-time activity as compared to those with normal BMI, but the overweight group had insignificant results. The underweight subjects, compared to normal, had a reversed trend. In terms of diagnostic groups, BPI subjects showed lower activity intensity later of the day (after 12 pm) as compared with healthy controls, which is consistent with previous findings \citep{Shou2017}. Females were also observed to have higher average activity intensities during day time and lower intensities at night. Interestingly, the effects of the day of the week were prominent in our results. With Sunday as the reference, we independently observed similar patterns across weekdays (Monday to Friday), where the morning activity intensities are much higher than Sunday, but no difference is shown for the rest of the day. Note that on Friday, we observed a rising trend of activity towards midnight, indicating that subjects have later bedtime on Friday. Saturday has a non-differentiable pattern compared with Sunday, but a similar rising trend towards midnight.

\begin{figure}[H]
\centering
\vspace{-2.7em}
\caption{ Activity Data Analysis Results. } 
\label{fig:activity}
\begin{tabular}{cccc}
\hspace{-2em}
\includegraphics[width=0.25\textwidth]{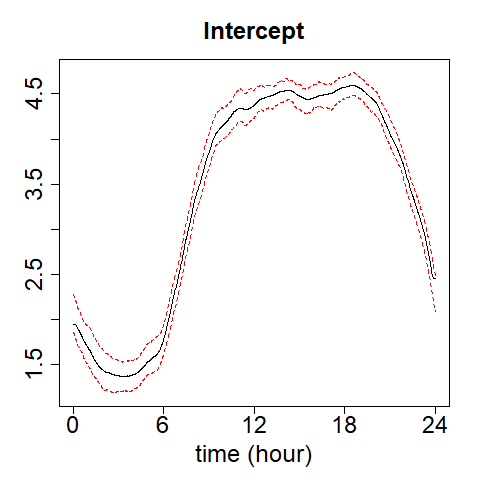}\hspace{-2em}&
\includegraphics[width=0.25\textwidth]{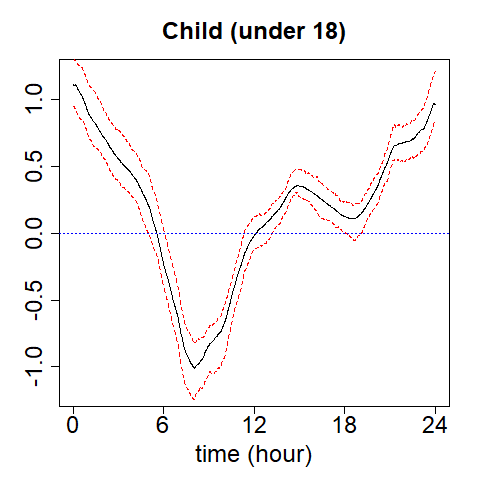} \hspace{-1em}&
\includegraphics[width=0.25\textwidth]{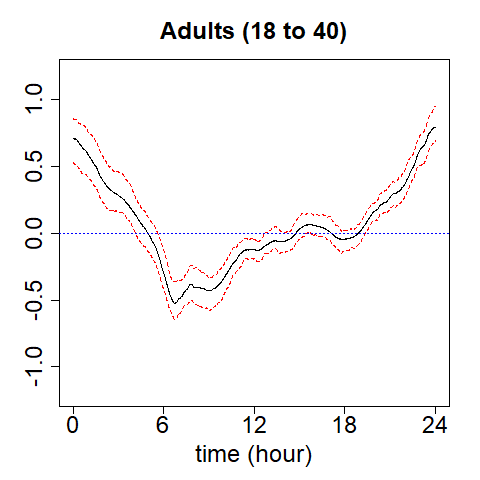} \hspace{-1em}&
\includegraphics[width=0.25\textwidth]{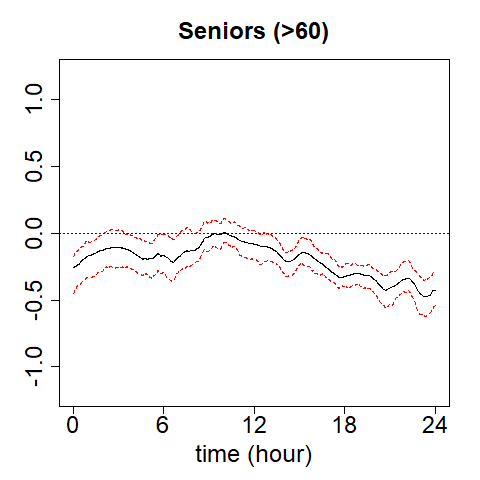} \hspace{-1em}\\
\hspace{-2em}
\vspace{-0.7em}
\includegraphics[width=0.25\textwidth]{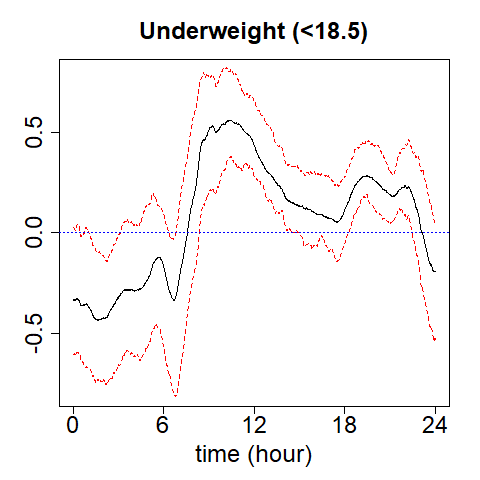}  \hspace{-1em}&
\includegraphics[width=0.25\textwidth]{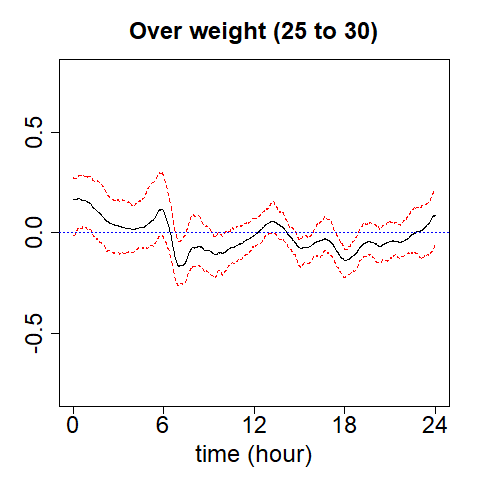} \hspace{-1em}&
\includegraphics[width=0.25\textwidth]{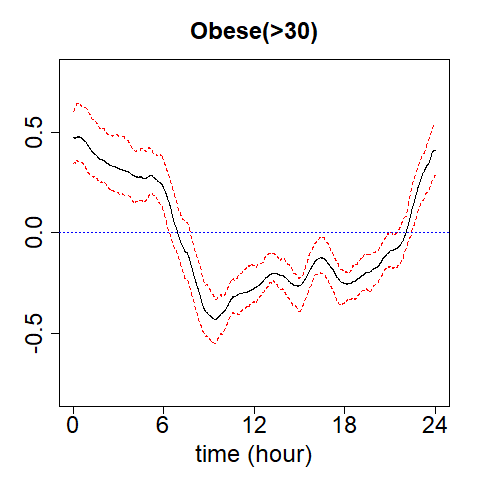}\hspace{-1em}&
\includegraphics[width=0.25\textwidth]{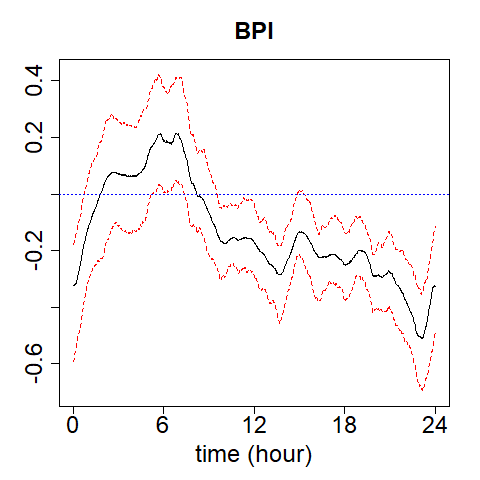}\hspace{-1em}\\
\hspace{-2em}\vspace{-0.7em}
\includegraphics[width=0.25\textwidth]{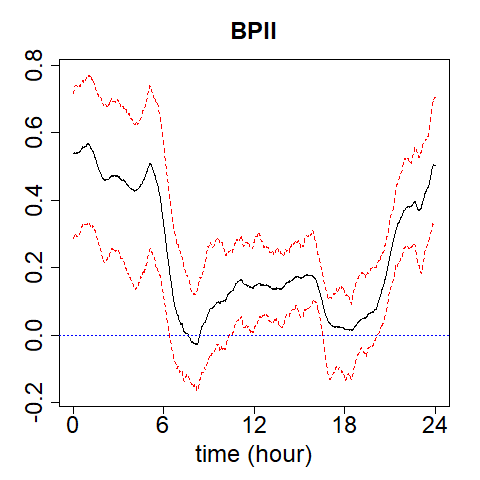}\hspace{-1em}&
\includegraphics[width=0.25\textwidth]{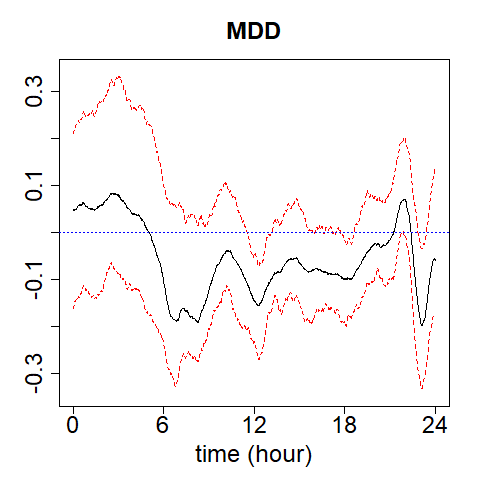}&
\includegraphics[width=0.25\textwidth]{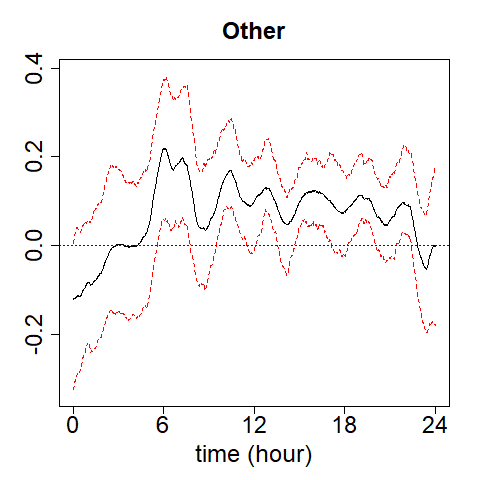}&
\includegraphics[width=0.25\textwidth]{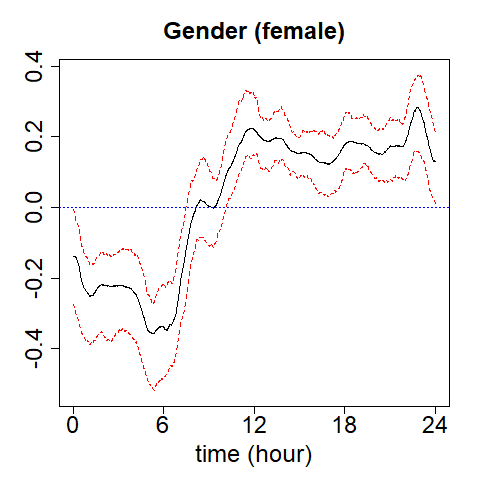}\\
\hspace{-2em}\vspace{-0.7em}
\includegraphics[width=0.25\textwidth]{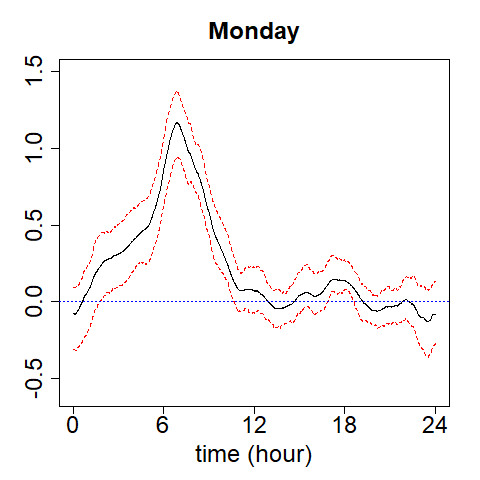}\hspace{-1em}&
\includegraphics[width=0.25\textwidth]{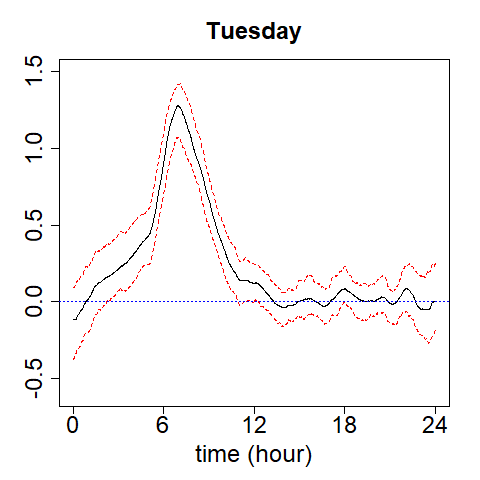}\hspace{-1em}&
\includegraphics[width=0.25\textwidth]{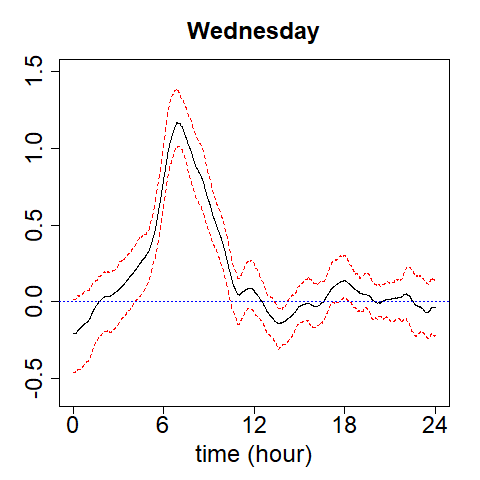}\hspace{-1em}&
\includegraphics[width=0.25\textwidth]{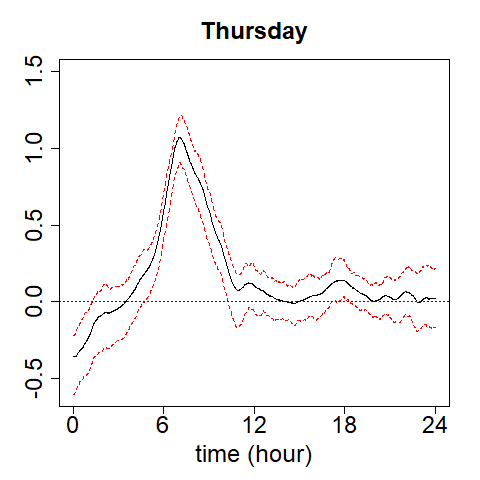}\hspace{-1em}\\
\hspace{-2em}\vspace{-0.7em}
\includegraphics[width=0.25\textwidth]{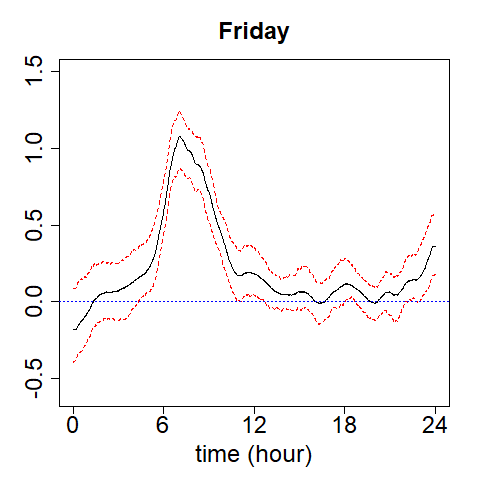}\hspace{-1em}&
\includegraphics[width=0.25\textwidth]{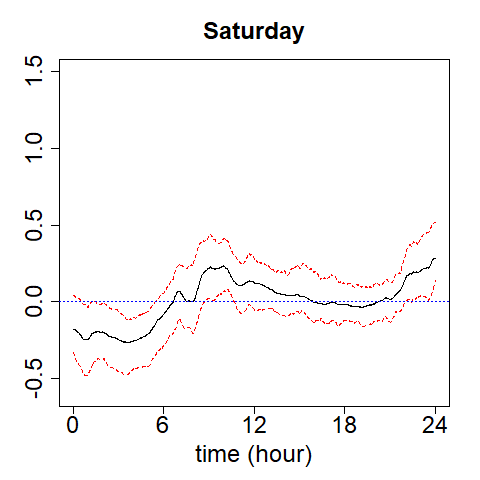}\hspace{-1em}\\
\end{tabular}
\label{fig:asir1}
\end{figure}

\section{3D Brain Imaging Analysis}
We now demonstrate the proposed method for analyzing 3D brain imaging. We are again using the grid graph for adjacency matrix. Our data came from a public data source of attention hyperactivity disorder (ADHD), the ADHD-200 sample, which was an initial effort to understand the neuro basis of the disease by ADHD consortium. The dataset contains structural and BOLD functional magnetic resonance imaging (MRI) scans from 776 subjects, including 491 typically developing controls (TDs) and 285 ADHD children.

We focused on the gray matter voxel-based morphometry (VBM) maps preprocessed via the Burner pipeline using SPM8, downloaded from the Neuro Bureau (\url{https://www.nitrc.org/plugins/mwiki/index.php?title=neurobureau:BurnerPipeline}). The intensities of the images quantify the normalized gray matter concentration changes at specific brain locations. The same dataset was used in \cite{li2017parsimonious}, where the performance of OLS and tensor envelope based regression were compared. 
Following \cite{li2017parsimonious}, we downsized the images to 30$\times$ 36$\times$ 30 using TensorReg from the original dimension of 121$\times$ 145$\times$ 121. The covariates in the model include ADHD, age, gender, and handedness defined as 1 for left-handed children and 0 for right-handed. We also include the site indicators in the multivariate analysis to adjust for the systematic site effects. The estimated regression coefficients were mapped back to the 3D population-average template of the Burner VBM images. The Harvard-Oxford Atlas \citep{Bohland2009} were used for references of the cortical regions. 

Figure \ref{fig:ADHDcoef} shows the brain regions with the highest regression coefficients based on GFMR, which corresponds to areas where the gray matter concentrations are mostly associated with ADHD, age, gender, and handedness. The areas where the coefficients' magnitudes are in the 10\%, 5\%, 2.5\%, and 0.5\% tails across the whole brain were displayed. 
We observed that after adjusting for age, gender, handedness, and site-effects, ADHD showed the most substantial reduction in the gray matter volume in the frontal lobe, such as the anterior and posterior cingulate gyrus and frontal operculum. Such regions have been known to be associated with sensory and motor responses, as well as cognitive and emotional functions. Significant cortical thinning in anterior cingulate gyrus has been reported among children with ADHD \citep{vanRooij2015}. These findings of the frontal lobe were novel in our analysis and were not captured in \cite{li2017parsimonious}. Instead, their methods identified superior temporal gyrus, pyramid, and uvula in the cerebellum. Additionally, we observed that cortical thickness increases in the frontal operculum with older children, which is linked with better task controls\citep{Higo2011}. Females are shown to have a thicker temporal fusiform cortex and frontal medial cortex. Interestingly, although the magnitude of the handedness effects is small, we observed an asymmetric association where the left-handed children tend to have greater gray matter thickness on the left hemisphere of the brain in regions such as posterior parahippocampal gyrus, lingual gyrus, and temporal occipital fusiform cortex. Such findings are consistent with several previous studies \citep{cuzzocreo2009}.

The 3D outcome is more computationally challenging since the number of edges triples from 1-D to 3-D. This specific example consists of images from 770 subjects; each image consists of 32400 nodes and 94140 edges for a grid smooth structure.  Algorithm \ref{alg:palm} offers an iterative procedure of solving a TV denoising problem. In this real data example and our simulation studies, the algorithm converges in 10-20 iterations, when the parameters are initialized with random noise following a normal distribution. The most computationally expensive step of Algorithm \ref{alg:palm} is line 6, where one is solving a TV denoising problem for a graph with $32400\times 770$ nodes and $94140 \times 770$ edges. Solving TV denoising in a single step with this scale is very costly in memory. Our algorithm can solve this step distributively. Since there are no edges across subjects, one can distribute the fused lasso problem to multiple cores. i.e., within each iteration, at line 6,  one could divide the samples to several subgroups, solve for the $\mu^{(k+1)}$ for each subgroup, and combine the results together. The combined estimates of  $\mu^{(k+1)}$ are equivalent to the solutions from computing the combined TV denoising problem. For this dataset, we distributed the problem in lines 6 to 16 cores in a workstation, in batches with 8 samples and tuning parameter $\lambda=0.05$. %The computation was conducted on a workstation with 16 cores. %Windows 10 desktop computer with Intel(R) Core(TM) i7-7820U CPU@3.60GHz processor (16 cores), 32.00 GB installed memory(RAM).  It takes approximately 5 hours to run on 16 cores. 
%with a chosen tuning parameter. Admittedly, the computational challenge caused problem in conducting cross validation for tuning parameters and for obtaining the bootstrap confidence intervals on a single machine. However, the algorithm is highly distributed and bootstrap would be practical using parallel computing resources.

\begin{figure}[h!]
%\vskip 0.05in
 \centering
\begin{subfigure}{0.45\textwidth}
\includegraphics[height=1.4in]{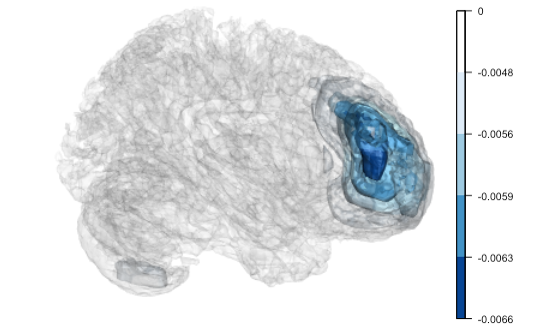}
\caption{ADHD}
\end{subfigure}
\begin{subfigure}{0.45\textwidth}
%\hspace{1em}
\includegraphics[height=1.4in]{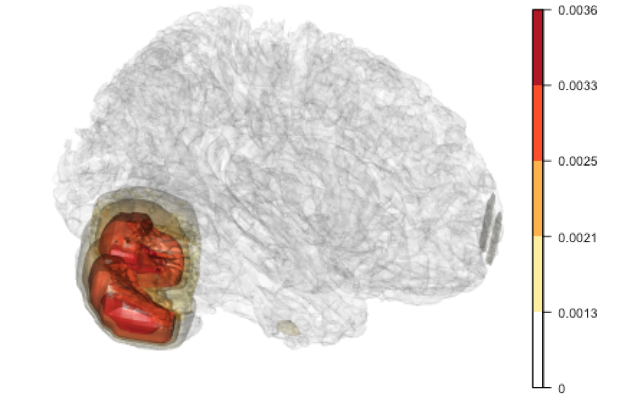}
\caption{Age}
\end{subfigure}
\begin{subfigure}{0.45\textwidth}
\hspace{.3em}
\includegraphics[height=1.45in]{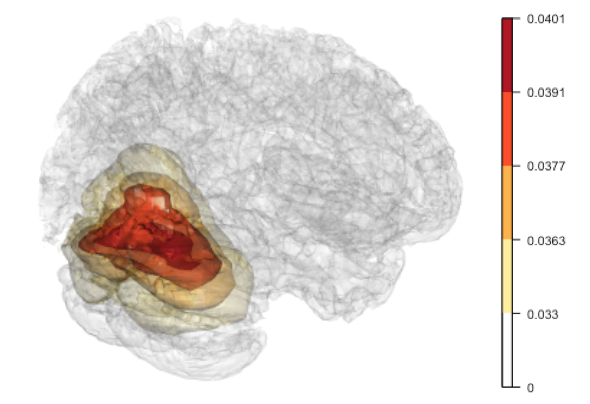}
\caption{Gender}
\end{subfigure}
\begin{subfigure}{0.45\textwidth}
\hspace{1em}
\includegraphics[height=1.5in]{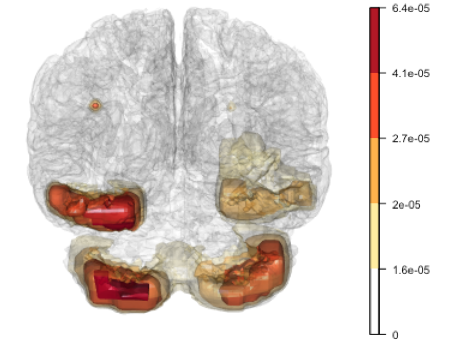} 
\caption{Left Handed}
\end{subfigure}

\caption{\small{ADHD analysis results: from light to dark are regions with estimated regression coefficients that lie above the 10\%, 5\%, 2.5\% and 0.5\% tails across the whole brain, receptively for covariates including ADHD, age, gender (females) and left-handedness. The color bars correspond to the positive or negative coefficients that are shown in the images.}}
 \label{fig:ADHDcoef}
\end{figure}

\section{Conclusion}
\label{sec:discussion}
We propose the Graph-Fused Multivariate Regression (GFMR) to solve the challenging regression problem with the noisy and high dimensional outcome. It uses a TV regularization term to encourage smoothness under flexible user-defined spatial and temporal structures. In addition, we propose an efficient and scalable algorithm that converges to the global solution, which has the desirable oracle property.
Simulations have shown that our method can adaptively achieve superior performance over various outcome dimensions and data generation mechanisms, including non-uniform sized signal. The advantage of GFMR is especially prominent when the sample size is relatively small. Our method is shown to be directly applicable to 1D accelerometry data and 3D brain imaging outcomes.  
Since GFMR allows to incorporate a user-defined adjacency matrix for smoothness, the method can be flexibly extended to data of higher dimensions such as 4D functional MRI. It also provides opportunities to integrate prior knowledge from other data modality or view. For example, when the outcome is structural imaging, the brain connectivity graph from fMRI could be input as an adjacency matrix to guide the TV penalty.

In this paper, we presented results for the adjacency matrix $D$ corresponding to the difference in means between the nodes connected by edges. Intuitively, it is most suitable for block-wise constant signals. On the one hand, the simple function approximation Lemma justifies our method is suitable for more general signal patterns since the block-wise constant signal can approximate any bounded function well, which explains the good performance of our proposed method in the 1D simulation setting 1, where the signals are not piecewise constant. One the other hand, our algorithm can be easily extended to incorporate other continuous constantly changing signals by extending the $D$ matrix to incorporate higher-order differences corresponding to 1st, 2nd or higher-order differentiation. The higher-order $D$ matrix can be implemented by the transformation proposed in \cite{steidl2006splines}; future work can examine whether it could improve the accuracy in some specific applications. Similar to the other penalized regression methods, our proposed method focuses on providing point estimation.  In translational medical research, in addition to signal recovering accuracy, proper inference of variability of the estimation is desirable. One can provide a confidence interval for the estimated coefficients through bootstraps. However, as the number of edges increases tremendously in the 3D/4D brain imaging applications, the computational burden became very heavy. Although our algorithm can be distributed to multiple cores to save time, computing bootstrap CI costs thousands of core hours and terabytes of memory on a high performance supercomputer.  It remains future work in providing a way to make the inference more feasible in 3D/4D applications.

\section*{Acknowlegements}
\label{sec:acknowledgement}

The authors would like to thank Drs. Xin Zhang and Lexin Li for providing codes and guidance of their envelope tensor regression algorithms. We also would like to acknowledge Junchi Li and Michael Daniels for discussion and advice to improve the quality of the work, and Jinshan Zeng for his comments on the convergence properties of the algorithm. HS's research was partially supported by the Intramural Research Program of the NIMH. The NIMH family study was conducted under the support of the Intramural Research Program (Merikangas; grant number Z-01-MH002804) and under clinical protocol 03-M-0211 (NCT00071786).

\section*{Disclaimer}
The views and opinions expressed in this article are those of the authors and should not be construed to represent the views of any of the sponsoring organizations, agencies, or US Government.

%\section*{conflict of interest}
%You may be asked to provide a conflict of interest statement during the submission process. Please check the journal's author guidelines for details on what to include in this section. Please ensure you liaise with all co-authors to confirm agreement with the final statement.

\section*{Data Availability Statement}
The code of our method that generate the numerical results were publicly available on \url{https://github.com/summeryingliu/imagereg}.
Real data example 1 (physical activity data) is research data not shared.
Real data example 2 (3-D brain imaging)  were derived from the following resources available in the public domain:\url{https://www.nitrc.org/plugins/mwiki/index.php?title=neurobureau:BurnerPipeline}

\appendix
\section{Proofs in Section~\ref{sec:model}}
\label{app:proof}
\subsection{Proof of Lemma~\ref{lem:theta2gamma}}
\begin{proof}[Proof of Lemma~\ref{lem:theta2gamma}]
Recall that $H_X$ is the projection matrix onto $\text{span}(X)^\perp$, therefore $\theta\in \text{span}(X)$ is equivalent to $H_X\theta=0$. By a variable transformation $\theta=X\Gamma$, solving (P)  is equivalent to solving (CP), and $\that=X\hat{\Gamma}$. When $X^TX$ is invertible, we solve $\hat{\Gamma}=(X^TX)^{-1}X^T \that$.
\end{proof}

\subsection{Proof of Theorem~\ref{thm:admm_converge}}
\begin{proof}[Proof of Theorem~\ref{thm:admm_converge}]
We rewrite problem~\eqref{eq:augmented_admm} as
\bas{
\min & \quad f(\theta)+g_1(\eta)+g_2(\mu)\\
s.t. & \quad A_1\theta+A_2\eta+A_3\mu=0
}
where
\bas{
&f(\theta)=\frac{1}{2}\|y-\theta\|^2, g_1(\eta) = \delta((I-H_v)\eta=0), g_2(\mu)=\|D_v^T\mu\|_1\\
& A_1 = \begin{bmatrix} -I\\-I\end{bmatrix}, A_2  = \begin{bmatrix} I\\0\end{bmatrix}, A_3  = \begin{bmatrix} 0 \\ I \end{bmatrix} \in \bR^{2nM\times nM}
}
Algorithm~\ref{alg:palm} is presented in the scaled form of ADMM. As is shown in~\citet{boyd2011distributed}, it is equivalent to the unscaled form as below.
Denote the Lagrangian as 
\bas{
L(\theta,\eta,\mu,U,V,\rho) = f(\theta) + g_1(\eta) + g_2(\mu) + \begin{bmatrix}U\\V\end{bmatrix}^T(A_1\theta+A_2\eta+A_3\mu)+\frac{\rho}{2}\|A_1\theta+A_2\eta+A_3\mu\|^2.
}
The unscaled form is
\bas{
& \theta^{(k+1)} = \arg\min_\theta  f(\theta) + \begin{bmatrix}U\\V\end{bmatrix}^T(A_1\theta)+\frac{\rho}{2}\|A_1\theta+A_2\eta^{(k)}+A_3\mu^{(k)}\|^2\\
& \eta^{(k+1)} = \arg\min_{\eta} g_1(\eta) + \begin{bmatrix}U\\V\end{bmatrix}^T(A_2\eta)+\frac{\rho}{2}\|A_1\theta^{(k+1)}+A_2\eta+A_3\mu^{(k)}\|^2\\
& \mu^{(k+1)} = \arg\min_{\mu} g_2(\mu) + \begin{bmatrix}U\\V\end{bmatrix}^T(A_3\mu)+\frac{\rho}{2}\|A_1\theta^{(k+1)}+A_2\eta^{(k+1)}+A_3\mu\|^2\\
& \begin{bmatrix}U^{(k+1)}\\V^{(k+1)}\end{bmatrix} = \begin{bmatrix}U^{(k)} \\V^{(k)} \end{bmatrix}+\rho(A_1\theta^{(k+1)}+A_2\eta^{(k+1)}+A_3\mu)
}
We first convert the three-block problem into a two-block problem with the technique in~\cite{chen2016direct}.

By first-order optimality condition in iteration $k$, we have
\bas{
& f(\theta)-f(\theta^{(k+1)}) + (\theta-\theta^{(k+1)})\left( -A_1^T(\lambda^k-\rho(A_1\theta+A_2\eta+A_3\mu)) \right)\ge 0, \forall \theta\in \bR^{nM}\\
& g_1(\eta) - g_1(\eta^{(k+1)}) + (\eta-\eta^{(k+1)})\left( -A_2^T(\lambda^k-\rho(A_1\theta+A_2\eta+A_3\mu)) \right)\ge 0, \forall \eta \in \bR^{nM}\\
& g_2(\mu)- g_2(\mu^{(k+1)}) + (\mu - \mu^{(k+1)})\left( -A_3^T(\lambda^k-\rho(A_1\theta+A_2\eta+A_3\mu)) \right)\ge 0, \forall \mu \in \bR^{nM}\\
}
Note that $A_2^TA_3 = 0$, we have
\bas{
& f(\theta)-f(\theta^{(k+1)}) + (\theta-\theta^{(k+1)})\left( -A_1^T(\lambda^k-\rho(A_1\theta+A_2\eta+A_3\mu)) \right)\ge 0, \forall \theta\in \bR^{nM}\\
& g_1(\eta) - g_1(\eta^{(k+1)}) + (\eta-\eta^{(k+1)})\left( -A_2^T(\lambda^k-\rho(A_1\theta+A_2\eta)) \right)\ge 0, \forall \eta \in \bR^{nM}\\
& g_2(\mu)- g_2(\mu^{(k+1)}) + (\mu - \mu^{(k+1)})\left( -A_3^T(\lambda^k-\rho(A_1\theta+A_3\mu)) \right)\ge 0, \forall \mu \in \bR^{nM}\\
}
which is also the first order optimality condition for the regime:
\bas{
&\theta^{(k+1)} = \arg\min f(\theta) - \lambda^{(k)T}(A_1\theta) + \frac{\rho}{2}\|A_1\theta+A_2\eta+A_3\mu\|^2;\\
& (\eta^{(k+1)},\mu^{(k+1)}) = \arg\min_{\eta,\mu} g_1(\eta)+g_2(\mu) - \lambda^{(k)T}(A_2\eta-A_3\mu) + \frac{\rho}{2}\|A_1\theta^{(k+1)}+A_2\eta+A_3\mu\|^2;\\
& \begin{bmatrix}U^{(k+1)}\\V^{(k+1)}\end{bmatrix} = \begin{bmatrix}U^{(k)}\\V^{(k)}\end{bmatrix} -\rho(A_1\theta^{(k+1)}+A_2\eta^{(k+1)}+A_3\mu^{(k+1)})
}
Clearly, this is a specific application of the two-block ADMM by regarding $(\eta, \mu)$ as one variable, $B:=[A_2, A_3]$ as one matrix, and $g(\eta,\mu):=g_1(\eta)+g_2(\mu)$ as one function. %Note that both x1k and x2k are not required to generate the (k + 1)-th iteration under the orthogonality condition A1T A2 = 0 in (2.3). 
Existing convergence results for the two-block ADMM thus hold for our case.

Now we note that $f$ is Lipschitz differentiable and strongly convex, both $f(\theta)$ and $g(\eta,\mu)$ are closed (their sublevel sets are closed) and proper (they neither take on the value $-\infty$ nor are they uniformly equal to $\infty$)).
Also $A$ and $B$ both have full column rank. Therefore by Theorem 7 in~\citet{nishihara2015general}, Algorithm~\ref{alg:palm} converges to the unique global solution of \eqref{eq:obj} linearly.
\end{proof}

\section{Proofs in Section~\ref{sec:consistency}}
\label{sec:proof_consistent}
\subsection{Proof of Theorem~\ref{th:consist}}
\begin{proof}[Proof of Theorem~\ref{th:consist}]
Define $H_X=I-H_v$ is the projection matrix projecting each voxel to $\text{span}(X)$.
Let $\hat{\theta}\in \bR^{nM}$ be the optimal solution for the following constrained optimization problem.
\bas{
\min_\theta \quad & \| \ttvec(Y^T)-\theta \|_F^2+\lambda \|D_v^T\theta\|_{\ell_1}\\
\mbox{s.t.} \quad & H_v\theta=0.
}
When $\text{rank}(X)=p$, by Lemma~\ref{lem:theta2gamma}, $\hat{\Gamma} = (X^TX)^{-1}X^T\text{mat}(\theta)_{n\times M}$. We first prove an oracle inequality for $\hat{\theta}$.
By the KKT condition, $\exists z\in \sign(D_v^T\hat{\theta}), \alpha\in \bR^{nM}$ such that 
\ba{
& 2(\hat{\theta}-y)+\lambda D_v^Tz+H_v\alpha=0 \label{eq:first_order}\\
& H_v\hat{\theta}=0 \nonumber
}
Multiplying $H_X$ on the left of Eq.~\eqref{eq:first_order}, we have $2H_X(\hat{\theta}-y)+\lambda H_XD_v^Tz =0$.
Equivalently, 
\bas{\forall \bar{\theta} \in R^{nM},\qquad 2\bar{\theta}^T H_X(\hat{\theta}-y)+\lambda \bar{\theta}^T H_XD_v^Tz =0}
Note that $H_X$ is the projection matrix to the column space of $\text{span}(X)$, it suffices to consider $\forall \bar{\theta}=\ttvec(\bar{\Gamma}^TX^T)$.
By definition of the sign operator, the following holds:
\bas{
&\hat{\theta}^T (\ttvec(Y^T)-\hat{\theta}) = \lambda\|D_v^T\hat{\theta}\|_{\ell_1}\\
&\bar{\theta}^T (\ttvec(Y^T)-\hat{\theta}) \le \lambda\|D_v^T\bar{\theta}\|_{\ell_1}
}
Subtracting the former from the latter, and replacing $\ttvec(Y^T)$ with $\theta^*+\epsilon$, we get
\bas{
(\bar{\theta}-\hat{\theta})^T (\theta^*-\hat{\theta})\le (\hat{\theta}-\bar{\theta})^T  \epsilon +\lambda\|D_v^T \bar{\theta}\|_{\ell_1}-\lambda\|D_v^T\hat{\theta}\|_{\ell_1}
}
Note $(\bar{\theta}-\hat{\theta})^T (\theta^*-\hat{\theta}) = \frac{1}{4}\left( \|\bar{\theta}-\hat{\theta}\|^2+\|\theta^*-\hat{\theta}\|^2 - \|\bar{\theta}-\theta^*\|^2 \right)$,
\ba{
\|\bar{\theta}-\hat{\theta}\|^2+\|\theta^*-\hat{\theta}\|^2 \le \|\bar{\theta}-\theta^*\|^2+ 4(\hat{\theta}-\bar{\theta})^T  \epsilon +4\lambda\|D_v^T\bar{\theta}\|_{\ell_1}-4\lambda\|D_v^T\hat{\theta}\|_{\ell_1}
\label{eq:diff_raw}
}

To bound $(\hat{\theta}-\bar{\theta})^T  \epsilon $, note $DD^T$ is the graph Laplacian, therefore when the graph is connected, we have $\text{ker}(D^T)=\text{ker}(DD^T)=\ttspan\{1_M\}$. Define $D^\dagger$ be the pseudo inverse of $D$, then $I-(D^\dagger)^T D^T$ is the projection matrix onto ker($D^T$),
\bas{
(\hat{\theta}-\bar{\theta})^T \epsilon = & \sum_{i=1}^n (\hat{\theta}_i -\bar{\theta}_i)^T \epsilon_i\\
= & \sum_{i=1}^n   ((I-(D^\dagger)^T D^T)\epsilon_i)^T(\hat{\theta}_i-\bar{\theta}_i)+((D^\dagger)^T D^T\epsilon_i)^T(\hat{\theta}_i-\bar{\theta}_i)\\
\le &  \sum_{i=1}^n \|(I-(D^\dagger)^T D^T)\epsilon_i\|\cdot \|\hat{\theta}_i -\bar{\theta}_i\|+\|(D^\dagger)^T\epsilon_i\|_\infty \cdot \|D^T(\hat{\theta}_i-\bar{\theta}_i)\|_{\ell_1}.
}
In view of the fact that $\epsilon_i\sim \cN(0,\sigma^2I_M)$ and $(I-(D^\dagger)^T D^T)$ being a projection matrix to a one dimensional space, by the tail bound for Gaussian random variables, $\forall i\in [n], \forall \delta>0$, 
\bas{
P(\|(I-(D^\dagger)^T D^T)\epsilon_i\|\ge 2\sigma\sqrt{2\log(2enM/\delta)} )\le \frac{\delta}{2n}
}
For the second part, by the maximal inequality for Gaussian random variables (\cite{massart2007concentration} Thm 3.12), and the variance of elements of $(D^\dagger)^T)\epsilon_i$ is upper bounded by $\rho^2\sigma^2$,
\bas{
P(\|(D^\dagger)^T\epsilon_i\|_\infty \ge \rho\sigma\sqrt{2\log(2emnM/\delta)} )\le \frac{\delta}{2n}
}
Applying union bound, with probability at least $1-\delta$,
\beq{
\bsplt{
(\hat{\theta}-\bar{\theta})^T \epsilon \le& \sum_{i=1}^n \left( 2\sigma\sqrt{2\log(2enM/\delta)}\|\hat{\theta}_i -\bar{\theta}_i\|+ \rho\sigma\sqrt{2\log(2emnM/\delta)}\|D^T(\hat{\theta}_i-\bar{\theta}_i)\|_{\ell_1} \right)\\
\le &  2\sigma\sqrt{2\log(2enM/\delta)} \|\hat{\theta} -\bar{\theta}\|+ \rho\sigma\sqrt{2\log(2emnM/\delta)}\left( \sum_{i=1}^n \|D^T(\hat{\theta}_i-\bar{\theta}_i)\|_{\ell_1} \right)
}
\label{eq:theta_eps}
}

By the triangle inequality, 
\bas{
\|D^T(\that-\tbar)_{T^c}\|_{\ell_1}-\|D^T\that_{T^c}\|_{\ell_1}\le \|D^T\tbar_{T^c}\|_{\ell_1}\\
\|D^T\tbar_{T}\|_{\ell_1}-\|D^T\that_{T}\|_{\ell_1}\le \|D^T(\tbar-\that)_{T}\|_{\ell_1}
}
Hence
\bas{
&\|D^T(\hat{\theta}_i-\bar{\theta}_i)\|_{\ell_1}+\|D^T(\bar{\theta}_i)\|_{\ell_1}-\|D^T(\hat{\theta}_i)\|_{\ell_1} \\
=& \|D^T(\hat{\theta}_i-\bar{\theta}_i)_T\|_{\ell_1}+\|D^T(\hat{\theta}_i-\bar{\theta}_i)_{T^c}\|_{\ell_1}+\|D^T(\bar{\theta}_i)_T\|_{\ell_1}+\|D^T(\bar{\theta}_i)_{T^c}\|_{\ell_1}-\|D^T(\hat{\theta}_i)_T\|_{\ell_1}-\|D^T(\hat{\theta}_i)_{T^c}\|_{\ell_1} \\
\le& 2\|D^T(\hat{\theta}_i-\bar{\theta}_i)_T\|_{\ell_1}+2\|D^T(\bar{\theta}_i)_{T^c}\|_{\ell_1}
}
By Definition 1 ,
$
\|D^T(\hat{\theta}_i-\bar{\theta}_i)_T\|_{\ell_1} \le \kappa_T^{-1}\sqrt{|T|}\|\hat{\theta}_i-\bar{\theta}_i\|.
$
We now plug above and \eqref{eq:theta_eps} back to \eqref{eq:diff_raw}, and take $\lambda = \rho\sigma\sqrt{\log(mnM/\delta)}$, then with probability at least $1-c_7n^{-1}$,
\beq{
\bsplt{
\|\bar{\theta}-\hat{\theta}\|^2+\|\theta^*-\hat{\theta}\|^2 \le & \|\bar{\theta}-\theta^*\|^2+ 8\sigma\sqrt{2\log(2en/\delta)} \|\hat{\theta} -\bar{\theta}\| + 4\lambda \|(D\bar{\theta})_{T^c}\|_{\ell_1} + 4\lambda \kappa_T^{-1}\sqrt{|T|}\|\hat{\theta}-\bar{\theta}\|}
\label{eq:tbar_minus_that}
}
Use Young's inequality, 
\bas{
8\sigma\sqrt{2\log(2en/\delta)} \|\hat{\theta} -\bar{\theta}\|\le & \frac{1}{2}\|\that-\tbar\|^2+64\sigma^2\log\left( \frac{2en}{\delta} \right)\\
4\lambda \kappa_T^{-1}\sqrt{|T|}\|\hat{\theta}-\bar{\theta}\| \le & \frac{1}{2}\|\that-\tbar\|^2+ 8\lambda^2\kappa_T^{-2}|T|
}
Canceling out $\|\that-\tbar\|^2$ on both sides of \eqref{eq:tbar_minus_that}, 
\bas{
\|\theta^*-\hat{\theta}\|^2 \le & \|\bar{\theta}-\theta^*\|^2+ 4\lambda \|(D_v^T\bar{\theta})_{T^c}\|_{\ell_1} +  64\sigma^2\log\left( \frac{2enM}{\delta} \right)+ 8\lambda^2\kappa_T^{-2}|T|
} 

%\le &  \|\bar{\theta}-\theta^*\|^2+ 4\lambda \|(D\bar{\theta})_{T^c}\|_1 + \left( 4\sigma\sqrt{2\log(2en/\delta)}+2\lambda \kappa_T^{-1}\sqrt{|T|}\right)^2+ \|\hat{\theta}-\bar{\theta}\|^2\\
Taking infimum on the right and plugging in $\lambda$ we have 
\bas{
&\|\theta^*-\hat{\theta}\|^2\le \\
 & \inf_{\bar{\theta}\in \bR^{n M}: H_X\bar{\theta}=\bar{\theta}} \left\{ \|\bar{\theta}-\theta^*\|^2+ 4\lambda \|(D_v^T\bar{\theta})_{T^c}\|_{\ell_1} \right\} + 64\sigma^2\log\left( \frac{2enM}{\delta} \right)+ 8\rho^2\sigma^2\log\left(\frac{mnM}{\delta}\right)\kappa_T^{-2}|T|
}
\label{eq:bound_theta}
\end{proof}

\subsection{Proof of Corollary~\ref{cor:gamma_oracle} }
Corollary~\ref{cor:gamma_oracle} can be proved by combining Lemma~\ref{lem:theta2gamma} and Theorem~\ref{th:consist}.

\begin{proof}[Proof of Corollary~\ref{cor:gamma_oracle}]
When $\frac{1}{n}X^TX=I_p$, we will be able to control the error in $\Gamma$, notice that $\|\ttvec(A)\|_2^2 = \|A\|_F^2$, we have from \eqref{eq:bound_theta} that 
\bas{
& \|\hat{\Gamma}-\Gamma^*\|_F^2 = \tr((X^TX)^{-1}X^T\text{mat}((\theta^*-\that) (\theta^*-\that)^T)X(X^TX)^{-1}) \\
= & \frac{1}{n} \|\theta^*-\hat{\theta}\|^2\\
=& \inf_{\bar{\theta}\in \bR^{n M}: H_X\bar{\theta}=\bar{\theta}} \left( \frac{1}{n} \|\bar{\theta}-\theta^*\|^2+ \frac{4\lambda}{n} \|(D_v^T\bar{\theta})_{T^c}\|_{\ell_1}  \right) + \frac{1}{n}\left( 4\sigma\sqrt{2\log(2enM/\delta)}+2\lambda \kappa_T^{-1}\sqrt{|T|}\right)^2\\
\le &  \inf_{\bar{\Gamma}\in \bR^{p\times M}} \left( \|\bar{\Gamma}-\hat{\Gamma}\|_F^2+ \frac{4\lambda}{ n} \|(X\bar{\Gamma}D)_{T^c}\|_{\ell_1} \right) + \frac{1}{n }\left( 4\sigma\sqrt{2\log(2enM/\delta)}+2\lambda \kappa_T^{-1}\sqrt{|T|}\right)^2
}
\end{proof}
\bibliography{fused}
\clearpage

%\section*{Web Appendix}
%In the web appendix, we provide proofs of Proposition~1,  Theorem~1, and Collary~2.

\end{document}